\numberwithin{equation}{section}
\providecommand{\ud}{\,\mathrm{d}}
\providecommand{\abs}[1]{\left\lvert#1\right\rvert}
\providecommand{\ket}[1]{\left\vert#1\right\rangle}
\providecommand{\hprod}[2]{\left\langle#1\,\vert\,#2\right\rangle}
\providecommand{\bprod}[2]{\left(#1\,\vert\,#2\right)}
\providecommand{\vect}[1]{{\boldsymbol{#1}}}
\providecommand{\norm}[1]{\left\lVert#1\right\rVert}
\providecommand{\conj}[1]{\overline{#1}}
\providecommand{\R}{{\mathbb R}}
\providecommand{\C}{{\mathbb C}}
\providecommand{\Z}{{\mathbb Z}}
\providecommand{\N}{{\mathbb N}}
\providecommand{\ii}{\mathbbm{i}}
\newtheorem{definition}{Definition}
\newtheorem{proposition}[definition]{Proposition}
\begin{document}

\title{Efficient Algorithm for Two-Center\\
Coulomb and Exchange Integrals of\\
Electronic Prolate Spheroidal Orbitals}

\author{Christian B. Mendl\footnote{Mathematics Department, Technische Universit\"at M\"unchen; mendl@ma.tum.de}}

\date{\today}

\maketitle

\begin{abstract}
\small
\noindent We present a fast algorithm to calculate Coulomb/exchange integrals of prolate spheroidal electronic orbitals, which are the exact solutions of the single-electron, two-center Schr\"odinger equation for diatomic molecules. Our approach employs Neumann's expansion of the Coulomb repulsion $1/\abs{\vect{x}-\vect{y}}$, solves the resulting integrals symbolically in closed form and subsequently performs a numeric Taylor expansion for efficiency. Thanks to the general form of the integrals, the obtained coefficients are independent of the particular wavefunctions and can thus be reused later.

Key features of our algorithm include complete avoidance of numeric integration, drafting of the individual steps as fast matrix operations and high accuracy due to the exponential convergence of the expansions.

Application to the diatomic molecules $\mathrm{O}_2$ and $\mathrm{CO}$ exemplifies the developed methods, which can be relevant for a quantitative understanding of chemical bonds in general.
\end{abstract}

\section{Introduction}
\label{sec:Introduction}

The two-center electronic Schr\"odinger equation is a natural starting point to study diatomic molecules or chemical bonds. It is well known that it separates in prolate spheroidal coordinates. Thus, the corresponding single-electron orbitals can be calculated efficiently. For several electrons, however, the tedious inter-electron Coulomb repulsion integrals have impeded a widespread use of these orbitals so far. To alleviate these difficulties, we present an efficient algorithmic framework in this paper.

In the computational chemistry literature, LCAO (linear combination of atomic orbitals) is the most common approach to construct electronic wavefunctions for molecules. It dates back to the early days of quantum mechanics~\cite{LennardJones1929}. In the seminal paper~\cite{Boys1950}, Boys proposed Gaussian-type atomic orbitals since the necessary integrals can be explicitly evaluated. Hence they are widely used in modern computational chemistry software packages. Nevertheless, only the exact single-electron spheroidal orbitals are -- by definition -- precise for any distance of the atomic nuclei. This fact is an important advantage for studying diatomic molecules and chemical bonds.

An interesting alternative approach to diatomic molecules is the Holstein-Herring method~\cite{Holstein1952,Herring1962,Tang1991,AsymptoticsExchangeIntegralScott2004} for calculating exchange energies of $\mathrm{H}_2^+$-like molecular ions. This method has recently been extended to two-active-electron systems~\cite{ExchangeTwoElectronDiatomic2004}. However, it is not suitable for an arbitrary number of valence electrons.

Another common approach first proposed by Hylleraas~\cite{Hylleraas1929} for the helium atom includes the inter-electron distance $r_{ij}$ as independent variable into the electronic wavefunction. Thus, the pairwise electronic Coulomb cusp is handled explicitly, which reduces the number of required wavefunctions. James and Coolidge~\cite{JamesCoolidge1933} have applied this method to the $\mathrm{H}_2$ molecule using spheroidal coordinates, which still serves as starting point for modern benchmark calculations. Ref.~\cite{DiatomicHylleraas1977} contains an extension to the $\mathrm{He}_2^+$ and $\mathrm{He}_2$ molecule, and a modern review can be found in~\cite{R12MethodsReview2006}.

Ref.~\cite{GaussianMolecularX2001} is part of a series which provides an extensive discussion of Gaussian basis sets for molecular calculations, and specifically computes the total energy and dissociation energy of $\mathrm{O}_2$.

Ref.~\cite{KohnShamDiatomic2009} employs Kohn-Sham density functional theory for diatomic molecules in (discretized) spheroidal coordinates. In particular, the authors apply their method to calculate the ground state energy of carbon monoxide $\mathrm{CO}$.


The basic setup of prolate spheroidal orbitals employed in the current paper has been developed in Ref.~\cite{AubertBessis1974,AubertBessis2nd1974} and applied to molecules with up 4 electrons. Our contribution is a reformulation into an efficient computational framework\footnote{The complete source code of our implementation is available online at~\cite{FermiFabSoftware} (in the \texttt{mathematica/diatomic} subfolder)}, which allows for an extension to many more electrons. For example, the oxygen dimer $\mathrm{O}_2$ contains 16 electrons.

\paragraph{Outline}
Section~\ref{sec:Framework} provides the details of the single-electron Schr\"odinger equation for atomic dimers in prolate spheroidal coordinates. Our presentation is based on the series~\cite{AubertBessis1974,AubertBessis2nd1974}, and additionally includes a ``best match'' mapping to the common LCAO molecular orbitals. Section~\ref{sec:LaguerreExpansions} contains the main abstract mathematical contribution of this paper: we prove a recurrence relation to efficiently multiply function expansions in terms of associated Laguerre polynomials, and solve several integrals symbolically in closed form. These results (combined with Neumann's expansion of $1/\abs{\vect{x}-\vect{y}}$ into Legendre polynomials) are the basis of our algorithm. It is described in detail in section~\ref{sec:Coulomb}, including cost analysis and error estimation. Section~\ref{sec:Results} contains the application of the algorithm to the $\mathrm{O}_2$ molecule, which is particularly interesting among atomic dimers due to its paramagnetism.

\section{Single-Electron Schr\"odinger Equation for\\Atomic Dimers}
\label{sec:Framework}

This section introduces the single-electron quantum mechanical framework, which serves as starting point for the many-electron calculations in section~\ref{sec:Coulomb}. We basically follow the discussion in Ref.~\cite{AubertBessis1974,AubertBessis2nd1974}.

\paragraph{Separation in prolate spheroidal coordinates}

The single-electron, two-center Schr\"odinger equation for a $\mathrm{H}_2^+$-like molecular ion in atomic units (Born-Oppenheimer approximation) reads
\begin{equation}
\label{eq:SchroedingerSingle}
\left(-\frac{1}{2}\Delta - \frac{Z_a}{r_a} - \frac{Z_b}{r_b}\right)\psi = E \psi.
\end{equation}
Here, $r_a$ and $r_b$ denote the distances to the fixed nuclei at $(0,0,\mp R/2)$, respectively, and $Z_a, Z_b \in \N_{>0}$ the nuclear charges (see figure~\ref{fig:DimerProlateSpheroidal}). The distance $R$ between the nuclei is also called \emph{bond length} in the chemistry literature. We have omitted the repulsive interaction of the nuclei ($\tfrac{Z_a Z_b}{R}$) for now to focus on the electronic energy, but will include it into the total energy later. In what follows, we set $Z := \tfrac{Z_a + Z_b}{2}$ and $\Delta q := (Z_a - Z_b)\,R$ (w.l.o.g.\ $\Delta q \ge 0$). The homonuclear case corresponds precisely to $Z_a = Z_b \equiv Z$ and $\Delta q = 0$.
\begin{figure}[!ht]
\centering
\includegraphics[width=0.8\textwidth]{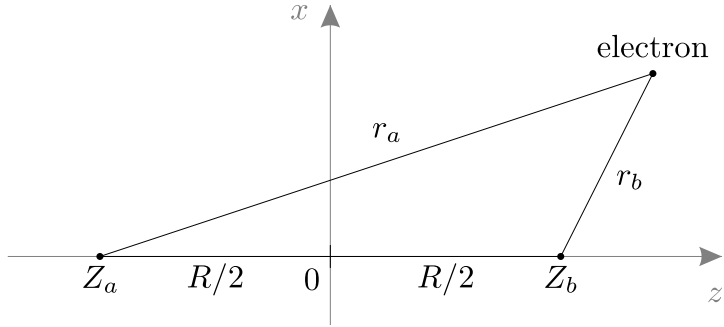}
\caption{Spatial arrangement of a single electron bound to two atomic nuclei}
\label{fig:DimerProlateSpheroidal}
\end{figure}

It is well known that equation~\eqref{eq:SchroedingerSingle} is separable in prolate spheroidal coordinates $(\xi,\eta,\varphi)$ defined by
\begin{align*}
\xi &:= \left(r_a + r_b\right)/R, \quad \xi \ge 1\\
\eta &:= \left(r_a - r_b\right)/R, \quad \eta \in [-1,1]
\end{align*}
and the Ansatz
\begin{equation}
\label{eq:Ansatz}
\psi(\xi,\eta,\varphi) = \Lambda(\xi) S(\eta) \frac{\mathrm{e}^{\ii m \varphi}}{\sqrt{2\pi}}.
\end{equation}
$m \in \Z$ is the eigenvalue of the angular momentum operator $L_z = -\ii \partial_\varphi$, which commutes with the Hamiltonian on the left hand side of~\eqref{eq:SchroedingerSingle} due to the azimuthal symmetry about the internuclear axis. In the following, we set $\mu := \abs{m}$ to shorten notation.

Plugging~\eqref{eq:Ansatz} into~\eqref{eq:SchroedingerSingle} leads to coupled ODEs for the radial part $\Lambda(\xi)$ and angular part $S(\eta)$. The latter reads
\begin{equation}
\label{eq:AngularSpheroidalODE}
\left[\frac{\partial}{\partial\eta}\left(\left(1-\eta^2\right)\frac{\partial}{\partial\eta}\right) + \underbrace{(p^2 - A)}_{\lambda^\mu_\ell(\ii p,\Delta q)} - \Delta q\,\eta + (\ii p)^2\left(1-\eta^2\right) - \frac{\mu^2}{1-\eta^2}\right] S(\eta) = 0,
\end{equation}
where the ``energy parameter'' $p \in \R_{>0}$ is defined via the energy $E$,
\begin{equation}
\label{eq:EnergySingle}
E =: -2\,(p/R)^2,
\end{equation}
and $A$ is an eigenvalue of the operator $\mathcal{G}$ (defined in~\cite{AubertBessis1974}). For the purpose of this paper, we simply regard $A$ as separation constant. In the homonuclear case $\Delta q = 0$, equation~\eqref{eq:AngularSpheroidalODE} is well know as the angular spheroidal wave equation~\cite{Stratton1941,Meixner1954,Flammer1957,Falloon2003} when we identify $(p^2 - A)$ as \emph{spheroidal eigenvalue} $\lambda^\mu_\ell(\ii p)$. Successive $\ell = \mu,\mu+1,\dots$ label the discrete set of eigenvalues for which~\eqref{eq:AngularSpheroidalODE} has a normalizable solution.

Since $E$ is finite, $p \to 0$ in the united atom limit $R \to 0$, and~\eqref{eq:AngularSpheroidalODE} reduces to Legendre's differential equation. Then $\lim_{R\to0}\mathcal{G} = -\vect{L}^2$ (angular momentum operator) with eigenvalue $A = -\lambda^\mu_\ell(0) = -\ell(\ell+1)$. However, except for this special case, $\ell$ is no valid quantum number since $\vect{L}^2$ does not commute with the Hamiltonian in general.

The homonuclear solution $S(\eta) \equiv S^\mu_\ell(\ii p,\eta)$ is already built into Mathematica\footnote{Specifically, the implementation~\cite{Falloon2003} has been integrated into Mathematica as \texttt{SpheroidalPS[n,m,$\gamma$,z]} and \texttt{SpheroidalQS[n,m,$\gamma$,z]} for the angular spheroidal function of the first and second kind, respectively.} and could thus be plugged into (numeric) integrals. Nevertheless, in order to use some properties of Legendre polynomials later and cover the heteronuclear case also, we employ the series expansion
\begin{equation}
\label{eq:SpheroidalAngularExpansion}
S^\mu_\ell(\ii p,\Delta q,\eta) = \sum_{k=\mu}^\infty c^\mu_{\ell,k}(p,\Delta q) \left(\frac{2k+1}{2} \frac{(k-\mu)!}{(k+\mu)!}\right)^{1/2} P^\mu_k(\eta).
\end{equation}
Plugged into~\eqref{eq:AngularSpheroidalODE} results in a three-term recurrence relation for the coefficients $c^\mu_{\ell,k}(p) \equiv c^\mu_{\ell,k}(p,0)$ (homonuclear) and a five-term recurrence relation for $c^\mu_{\ell,k}(p,\Delta q)$ (heteronuclear)~\cite{Meixner1954,AubertBessis1974}. Namely, in the homonuclear case, only integers $k$ with the same parity as $\ell$ contribute to the sum due to symmetry. After truncating this expansion (which is justified due to the exponential decay of the coefficients), it may be rewritten as eigenvalue equation (see also~\cite{Hodge1970})
\begin{equation}
\label{eq:AngularCoeffEquation}
F^\mu(p,\Delta q)\,\vect{c} \stackrel{!}{=} \lambda\,\vect{c},\quad \vect{c} \equiv \left(c^\mu_{\ell,k}(p,\Delta q)\right)_k, \quad \lambda \equiv \lambda^\mu_\ell(\ii p,\Delta q)
\end{equation}
with a symmetric matrix $F^\mu(p,\Delta q)$. This matrix is tridiagonal in the homonuclear case (after proper relabeling) and pentadiagonal in the heteronuclear case. Note that fast eigenvalue solvers exist particularly for tridiagonal matrices. We adopt the normalization scheme used by~\cite{Meixner1954} and Mathematica, namely
\begin{equation}
\label{eq:AngularCoeffNormalization}
\int_{-1}^1 S^\mu_\ell(\ii p,\Delta q,\eta)^2 \ud\eta = \sum_{k=\mu}^\infty \abs{c^\mu_{\ell,k}(p,\Delta q)}^2 \stackrel{!}{=} \frac{2}{2\ell+1} \frac{(\ell+\mu)!}{(\ell-\mu)!}.
\end{equation}

The energy parameter $p$ couples~\eqref{eq:AngularSpheroidalODE} to the \emph{radial} equation
\begin{equation}
\label{eq:DimerODERadialStdform}
\left[\frac{\partial}{\partial\xi}\left(\left(\xi^2-1\right)\frac{\partial}{\partial\xi}\right) - \underbrace{(p^2 - A)}_{\lambda^\mu_\ell(\ii p,\Delta q)} + 2\,Z R\,\xi + (\ii p)^2\left(\xi^2-1\right) - \frac{\mu^2}{\xi^2-1}\right] \Lambda(\xi) = 0.
\end{equation}
This is the radial spheroidal differential equation except for the $2\,Z R\,\xi$ term, and formally resembles~\eqref{eq:AngularSpheroidalODE} apart from $\xi \ge 1$ versus $\abs{\eta} \le 1$. We determine $p$ numerically as follows.

First, define \emph{Hylleraas functions} via associated Laguerre polynomials as
\begin{equation*}
H^\mu_k(x) := x^{\mu/2} \mathrm{e}^{-x/2} \sqrt{k!/(k+\mu)!}\ L^\mu_k(x), \quad k, \mu \in \N_0.
\end{equation*}
This choice precisely incorporates the orthogonality relation for Laguerre polynomials, such that
\begin{equation}
\label{eq:LaguerreOrthogonality}
\int_0^\infty H^\mu_{k'}(x) H^\mu_k(x) \ud x = \delta_{kk'}.
\end{equation}
Given a sequence $\vect{d} \equiv (d_k)_{k \ge 0}$, we set
\begin{equation}
\label{eq:HylleraasExpansion}
H^\mu_\vect{d}(x) := \sum_{k=0}^\infty d_k \,H^\mu_k(x).
\end{equation}
(Note that $k$ starts at $0$ instead of $\mu$ as in~\eqref{eq:SpheroidalAngularExpansion}.) Employing such an expansion for the radial wavefunction,
\begin{equation}
\label{eq:HylleraasWavefunction}
\Lambda(\xi) = H^\mu_{\vect{d}}(2 p\,(\xi-1))
\end{equation}
results in a three-term recursion formula~\cite{AubertBessis1974} for the to-be determined coefficients $d_k$. They will turn out to decay exponentially, as illustrated in figure~\ref{fig:DkExpansion}.
\begin{figure}[!ht]
\centering
\includegraphics[width=0.8\textwidth]{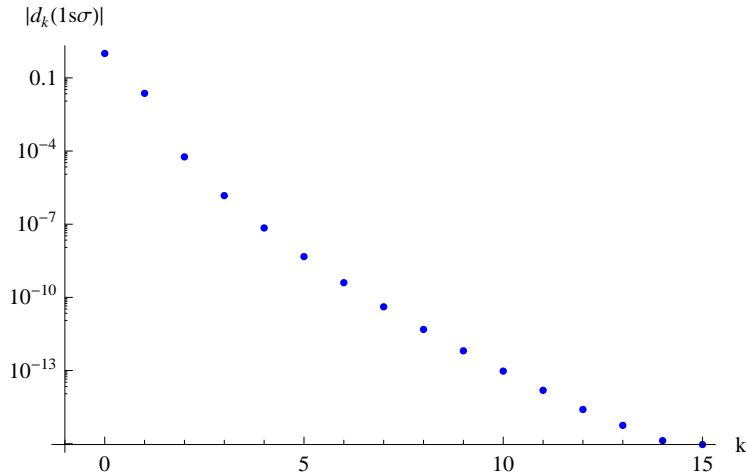}
\caption{Laguerre expansion coefficients of the $(\ell,m) = (0,0)$ groundstate radial wavefunction (see equations~\eqref{eq:HylleraasExpansion} and~\eqref{eq:HylleraasWavefunction}). The exponential decay of the coefficients justifies the truncation of the expansion.}
\label{fig:DkExpansion}
\end{figure}
Hence we can truncate the expansion and rewrite the recurrence relation as matrix equation~\cite{AubertBessis1974}
\begin{equation}
\label{eq:RadialCoeffEquation}
\left(B^\mu(p) R^\mu(p,\lambda) + p\,\mu^2 I\right)\vect{d} \stackrel{!}{=} 0, \quad \lambda \equiv \lambda^\mu_\ell(\ii p,\Delta q).
\end{equation}
Both $R^\mu$ and $B^\mu$ are symmetric tridiagonal matrices, and $I$ denotes the identity matrix. The left hand side is singular for a discrete set of values $p$ only. This condition finally determines $p$ and the energy $E$. Ref.~\cite{AubertBessis1974}~employs a Newton iteration to obtain both $p$ and $A$ simultaneously, such that the matrices in~\eqref{eq:RadialCoeffEquation} and $F^\mu - \lambda\,I$ in~\eqref{eq:AngularCoeffEquation} have zero determinants. An improved version uses the so-called Killingbeck method~\cite{Killingbeck1989,ScottAubertGrotendorst2006}. In our case, we apply a numerical root search algorithm over $p$ such that an eigenvalue of the matrix in~\eqref{eq:RadialCoeffEquation} becomes zero.

Considering the starting point of the numerical iteration, \cite{AubertBessis1974}~uses $p_0 = Z R/n$, which becomes exact in the unified atom limit $R \to 0$ and is thus valid for small $Z R$. Here, $n$ labels successive eigenvalues as in the unified atom limit. Alternatively, we have identified $p_0 = Z R/(2n)$ as reliable candidate for large values of $Z R$, which stems from the dissociation limit $R \to \infty$ (hydrogen-like atom plus isolated nucleus).

\begin{figure}[!ht]
\centering
\subfloat{
\includegraphics[width=0.9\textwidth]{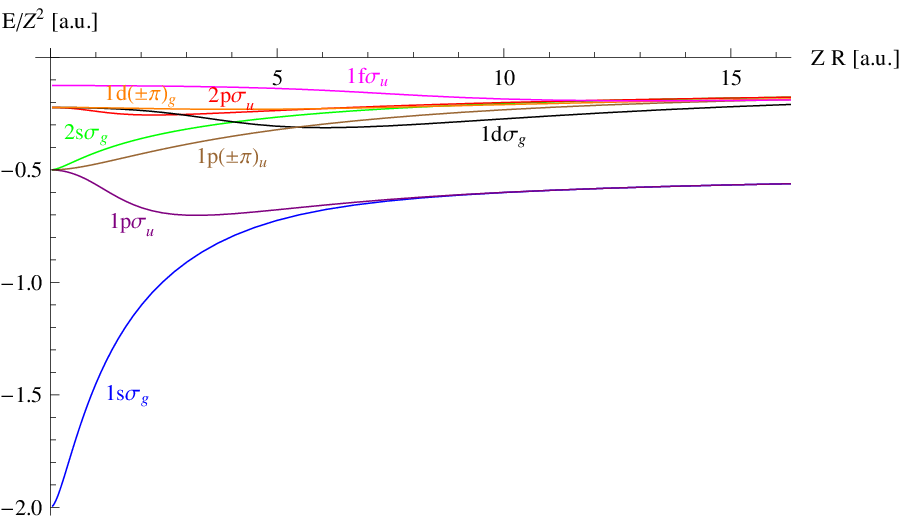}}\\
\subfloat{
\includegraphics[width=0.9\textwidth]{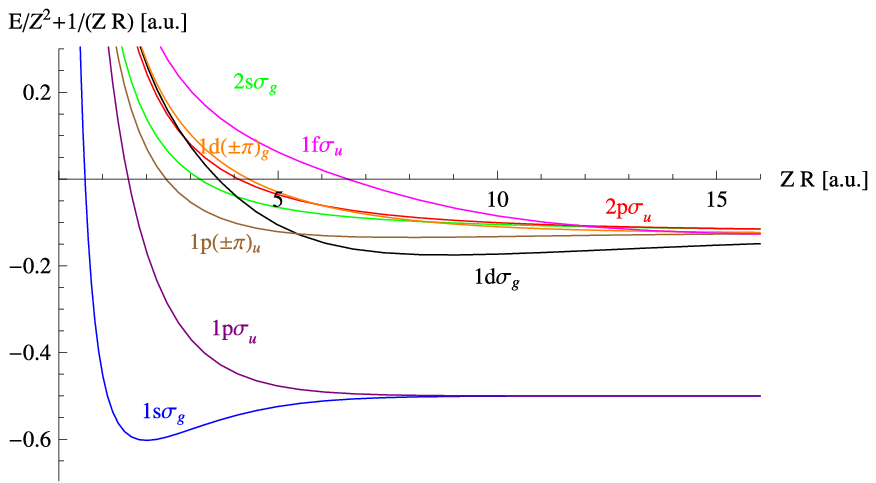}}
\caption{Single-electron energy levels (equation~\eqref{eq:EnergySingle}) of a $\mathrm{H}_2^+$-like homonuclear dimer with respect to $Z R$ (nuclear charge $\times$ nuclear distance), in atomic units. The bottom plot additionally includes the rescaled nuclear-nuclear repulsion term $1/(Z R)$. The unified-atom limit $R \to 0$ corresponds to a hydrogen-like atomic ion with one electron, nuclear charge $2 Z$ and energy levels $-2 Z^2/n^2$, in agreement with the curves of the top subfigure. In the dissociation limit $R \to \infty$, the dimer splits into a single hydrogen-like atom/ion and an isolated nucleus ($\mathrm{H} + \mathrm{p}$ for $Z = 1$). Thus, the electronic energy levels converge to $-\frac12 Z^2/n^2$.}
\label{fig:DiatomicEnergyLevels}
\end{figure}
Figure~\ref{fig:DiatomicEnergyLevels} shows the lowest few homonuclear energy levels in dependence of $Z R$, both with and without the (rescaled) nuclear repulsion term $1/(Z R)$. In analogy to the molecular term symbol, we employ the notation
\begin{equation}
\label{eq:MolecularTermSymbol}
n\ell \phantom{1}^{2s+1}m_{g/u}
\end{equation}
to label states. In common notation, $\ell = 0,1,2,3,\dots$ is designated by $\mathrm{s,p,d,f,\dots}$, respectively, and $m = 0, \pm1, \pm2, \dots$ by $\mathrm{\sigma,\pm\pi,\pm\delta,\dots}$. For fixed $(\ell,m)$, the ``principal value'' $n = 1,2,\dots$ enumerates successive energy levels. In the homonuclear case, the angular spheroidal wave function determines the \emph{parity} $(-1)^\ell$ (reflection about the origin, $\vect{x} \to -\vect{x}$). It is written as \emph{g}erade (even) or \emph{u}ngerade (odd). We omit the spin variable $s$ for now, which will become important for the many-electron calculations in section~\ref{sec:Coulomb}.

Having the exact solution of the two-center Schr\"odinger equation available calls for a comparison with the popular LCAO approach (linear combination of atomic orbitals). Figure~\ref{fig:MolecularOrbitals} tries to match the corresponding wavefunctions, taking parity and ordering of energy levels into account. However, note that the suggestive ordering has to be interpreted with caution since it depends on the nuclear distance $R$. For example, according to figure~\ref{fig:DiatomicEnergyLevels},
\begin{equation*}
E_\mathrm{1s\sigma_g} < E_\mathrm{1p\sigma_u}, E_\mathrm{1p(\pm\pi)_u}, E_\mathrm{2s\sigma_g} < E_\mathrm{2p\sigma_u}, \dots
\end{equation*}
for small nuclear distances $R$. This is different from the arrangement in figure~\ref{fig:MolecularOrbitals}.

\begin{figure}[!ht]
\centering
\includegraphics[width=\textwidth]{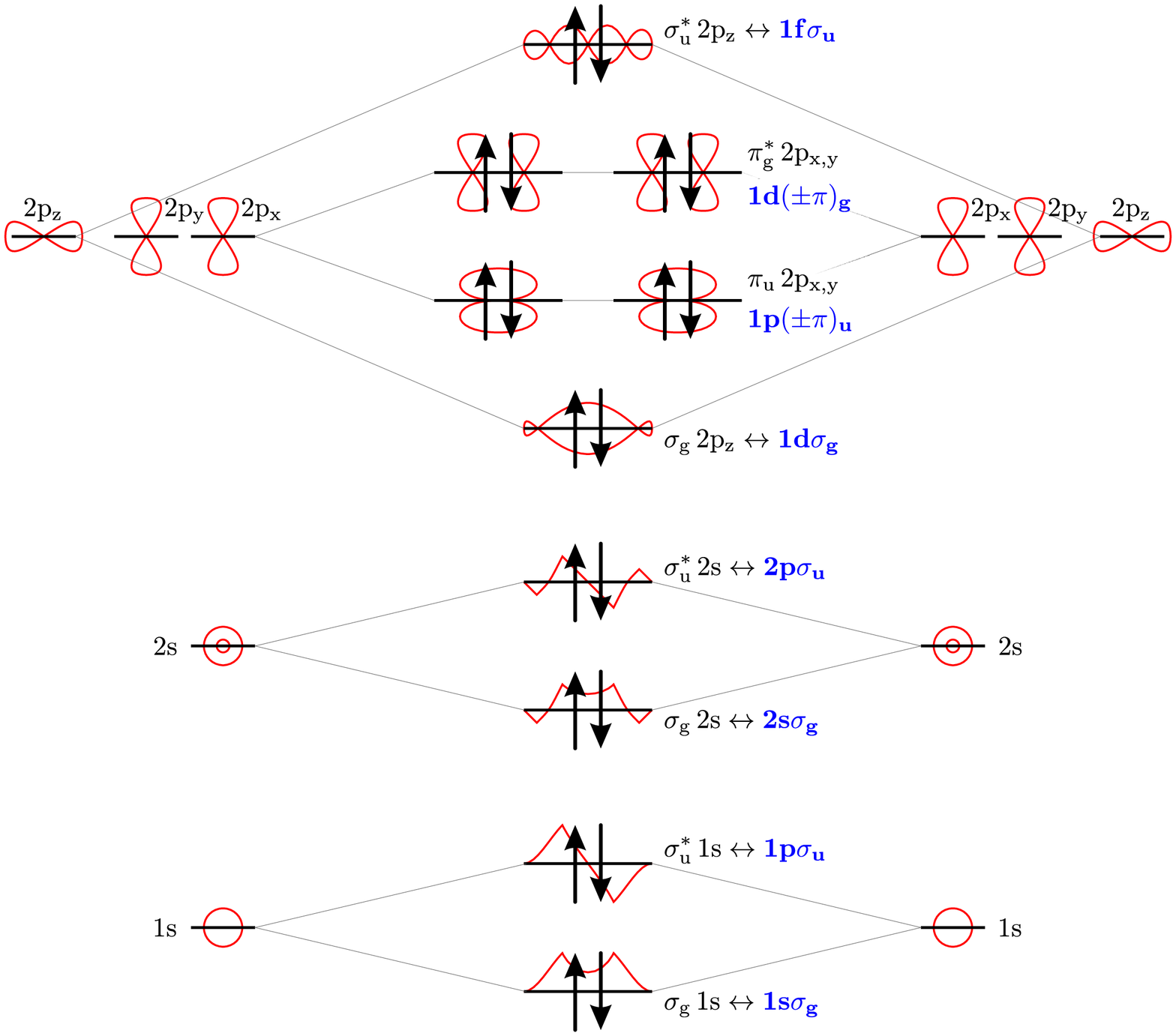}
\caption{Putative best match of the exact $\mathrm{H}_2^+$-like electronic wavefunctions (labeled $n \ell m_{g/u}$ in boldface blue) with the LCAO-MOs (molecular orbitals built from linear combinations of atomic orbitals) widely used in the literature (see e.g.~\cite{Atkins2006}). In particular, the parity (reflection about the origin, $\vect{x} \to -\vect{x}$) agrees in each instance. Orbitals are schematically drawn in red, and antibonding MOs are marked by a star ($^*$).}
\label{fig:MolecularOrbitals}
\end{figure}

\paragraph{Normalization}

In what follows, we derive a formula for the required normalization factor of the wavefunction. The volume element in prolate spheroidal coordinates equals $\ud V = (R/2)^3 \left(\xi^2 - \eta^2\right) \ud\xi \ud\eta \ud\varphi$. Thus
\begin{equation*}
\norm{\psi}_{L^2}^2 = \left(R/2\right)^3 \int_1^\infty \int_{-1}^1 \Lambda(\xi)^2 S(\eta)^2 \left(\xi^2 - \eta^2\right) \ud\eta \ud\xi.
\end{equation*}
The inner integral without the factor $\eta^2$ is already solved in~\eqref{eq:AngularCoeffNormalization}. To include $\eta^2$, we use the identity
\begin{equation}
\label{eq:LegendreMultX}
x \cdot P^\mu_k(x) = \frac{k+\mu}{2k+1} P^\mu_{k-1}(x) + \frac{k-\mu+1}{2k+1} P^\mu_{k+1}(x).
\end{equation}
Thus, after taking into account the normalization factors in the expansion~\eqref{eq:SpheroidalAngularExpansion}, we obtain
\begin{equation*}
\int_{-1}^1 S(\eta)^2\,\eta^2 \ud\eta = \norm{X_\mathrm{Leg}^\mu\vect{c}}^2, \quad \vect{c} \equiv \left(c^\mu_{\ell,k}(p)\right)_k, 
\end{equation*}
with the tridiagonal, symmetric matrix $X_\mathrm{Leg}^\mu$ given by
\begin{equation*}
X_{\mathrm{Leg},kk}^\mu = 0, \quad X_{\mathrm{Leg},k,k+1}^\mu = \left(\frac{(k+1-\mu)(k+1+\mu)}{(2k+1)(2k+3)}\right)^{1/2}, \quad k = \mu, \mu+1, \dots
\end{equation*}

We proceed analogously for the radial part. After a change of variables $x := 2 p\,(\xi-1)$ and due to the orthogonality~\eqref{eq:LaguerreOrthogonality}, we obtain
\begin{equation*}
\int_1^\infty \Lambda(\xi)^2 \ud\xi = \frac{1}{2p} \norm{\vect{d}}^2,
\end{equation*}
where $\vect{d}$ contains the expansion coefficients in~\eqref{eq:HylleraasWavefunction}. To incorporate the factor $\xi^2$, we employ the following well-known relation for Laguerre polynomials:
\begin{equation}
\label{eq:LaguerreMultX}
x \cdot L^\mu_k(x) = -(k+1) L^\mu_{k+1}(x) + (2k+\mu+1)L^\mu_k(x) - (k+\mu)L^\mu_{k-1}(x).
\end{equation}
Thus, multiplying an expansion~\eqref{eq:HylleraasExpansion} by $x$ yields
\begin{equation}
\label{eq:HylleraasMultX}
x \cdot H^\mu_{\vect{d}}(x) = H^\mu_{\vect{d}'}(x), \quad \vect{d}' := X^\mu\vect{d}
\end{equation}
with the tridiagonal, symmetric matrix $X_\mathrm{Lag}^\mu$ defined by
\begin{equation*}
X_{\mathrm{Lag},kk}^\mu = 2k+\mu+1, \quad X_{\mathrm{Lag},k,k+1}^\mu = -\left((k+1)(k+\mu+1)\right)^{1/2}, \quad k = 0,1,\dots
\end{equation*}
Plugging~\eqref{eq:HylleraasMultX} into the following integral yields
\begin{equation*}
\int_1^\infty \Lambda(\xi)^2\, \xi^2\ud\xi = \frac{1}{2p} \norm{\left(I + (2p)^{-1} X_\mathrm{Lag}^\mu\right)\vect{d}}^2
\end{equation*}

Assembling the radial and angular contributions finally results in
\begin{equation*}
\norm{\psi}_{L^2}^2 = \frac{(R/2)^3}{2p} \left(\norm{\vect{c}}^2 \norm{\vect{d} + (2p)^{-1} X_\mathrm{Lag}^\mu \vect{d}}^2 - \norm{X_\mathrm{Leg}^\mu\vect{c}}^2 \norm{\vect{d}}^2\right).
\end{equation*}
That is, we obtain the correct normalization factor directly from the expansion coefficients $\vect{c}$ and $\vect{d}$.

\paragraph{Dissociation limit $R \to \infty$}

From a physical point of view, separating the nuclei from each other should yield a hydrogen-like atom/ion plus an isolated nucleus. However, in the homonuclear case, the symmetry properties of the electronic wavefunctions ($\psi(\vect{x}) = (-1)^\ell\,\psi(-\vect{x})$ due to parity) imply that the electronic charge is equally distributed to both nuclei. This seeming contradiction can be reconciled by constructing superpositions of even and odd wavefunctions to obtain the well-known hydrogen-like wavefunctions, localized at either one or the other nucleus. (Note that conversely, the LCAO approach uses linear combinations of atomic orbitals as molecular wavefunctions.)

From the above arguments, we expect the energy levels to converge to $-\frac12 Z^2/n^2$ in the limit $R \to \infty$, as indicated in figure~\ref{fig:DiatomicEnergyLevels}. Along with it comes a heuristic understanding of the convergence rate\footnote{I am grateful to Gero Friesecke for helpful discussion regarding this point.}. Each ``half'' electron localized at a nucleus experiences an additional attraction from the respective other nucleus. This adds up to the net attraction energy
\begin{equation}
\label{eq:NetAttraction}
-\frac{0.5 \times Z}{R} - \frac{0.5 \times Z}{R} = -\frac{Z}{R}.
\end{equation}
Subtracting this correction term (which of course vanishes as $R \to \infty$) from the energy $E$ leads to exponential (instead of algebraic) convergence, as shown in figure~\ref{fig:DissociationEnergyConvergence}. Namely, the electronic charge distributions decay exponentially with distance from the nuclei, implying a likewise decay of the error.
\begin{figure}[!ht]
\centering
\includegraphics[width=0.8\textwidth]{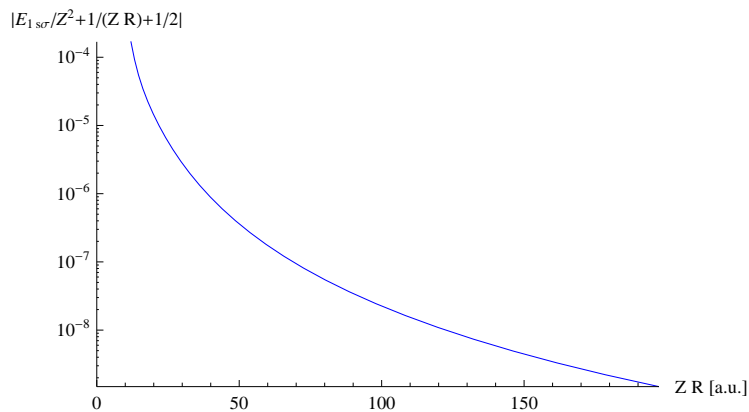}
\caption{Exponential convergence of the $\mathrm{1s\sigma}$ energy level $E/Z^2$ plus the $\frac{1}{Z R}$ correction term~\eqref{eq:NetAttraction} to the hydrogen groundstate energy $-\frac{1}{2}$, as $R \to \infty$.}
\label{fig:DissociationEnergyConvergence}
\end{figure}

\section{Properties of Laguerre Expansions}
\label{sec:LaguerreExpansions}

This technical section is based on function expansions in terms of associated Laguerre polynomials (see equation~\eqref{eq:HylleraasExpansion}). We develop a computational framework for multiplying these expansions, and derive analytic solutions of integrals appearing in section~\ref{sec:Coulomb}.

\subsection{Products of Laguerre Expansions}
\label{sec:ProductLaguerre}

We want to solve the following task: Given integers $m_1, m_2 \in \Z$ and exponentially decaying sequences $(d_{1,k}), (d_{2,k})$, calculate the sequence $(d_k)$ satisfying
\begin{equation}
\label{eq:HylleraasMult}
H^{\abs{m_1}}_{\vect{d}_1}(x) \cdot H^{\abs{m_2}}_{\vect{d}_2}(x) \stackrel{!}{=} H^{\abs{m_1-m_2}}_{\vect{d}}(x).
\end{equation}
For conciseness of notation, let $\mu_i := \abs{m_i}$, $i = 1,2$ and $\mu_3 := \abs{m_1-m_2}$, and assume without loss of generality that $\mu_1 \ge \mu_2$. Depending on the signs of $m_1$ and $m_2$, we have $\mu_3 = \mu_1 \pm \mu_2$. The orthogonality relation of Laguerre polynomials leads to
\begin{equation}
\label{eq:HylleraasMultCoeff}
d_k = \hprod{\vect{d}_2}{\Pi^{\vect{\mu}}_k\,\vect{d}_1}, \quad k = 0,1,2,\dots
\end{equation}
with the symmetric matrix $\Pi^{\vect{\mu}}_k \equiv (a^{\vect{\mu}}_{ijk})_{ij}$ given by
\begin{equation}
\label{eq:a_m_ijk}
\begin{split}
a^{\vect{\mu}}_\vect{i}
&:= \int_0^\infty H^{\mu_1}_{i_1}(x) H^{\mu_2}_{i_2}(x) H^{\mu_3}_{i_3}(x) \ud x\\
&= \left(\prod_{k=1}^3 \frac{i_k!}{(i_k+\mu_k)!}\right)^{1/2} \times
\begin{cases}
\,b^\vect{\mu}_\vect{i}(3/2)& \text{if $\mu_3 = \mu_1-\mu_2$}\\
\,\tilde{b}^\vect{\mu}_\vect{i}(3/2)& \text{if $\mu_3 = \mu_1+\mu_2$}
\end{cases}
\end{split}
\end{equation}
In the above expression,
\begin{align}
\label{eq:bmi_diff}
b^\vect{\mu}_\vect{i}(z)
&:= \int_0^\infty x^{\mu_1} L^{\mu_1}_{i_1}(x) L^{\mu_2}_{i_2}(x) L^{\mu_3}_{i_3}(x)\,\mathrm{e}^{-z\,x} \ud x,\\
\label{eq:bmi_sum}
\tilde{b}^\vect{\mu}_\vect{i}(z)
&:= \int_0^\infty x^{\mu_1+\mu_2} L^{\mu_1}_{i_1}(x) L^{\mu_2}_{i_2}(x) L^{\mu_3}_{i_3}(x)\,\mathrm{e}^{-z\,x} \ud x
\end{align}
defined for $z \in \R_{>0}$. Using the recurrence relation
\begin{equation*}
x^\mu L^\mu_i(x) = (i+\mu) x^{\mu-1} L^{\mu-1}_i(x) - (i+1) x^{\mu-1} L^{\mu-1}_{i+1}(x),
\end{equation*}
the integrals~\eqref{eq:bmi_diff} and~\eqref{eq:bmi_sum} can be reduced to the following proposition, which is a generalization of~\cite{LaguerreProducts1960}.

\begin{proposition}
Given fixed integers $\vect{\mu} \in \N_0^3$, the coefficients
\begin{equation}
\label{eq:cmi_z}
c^\vect{\mu}_\vect{i}(z) := \int_0^\infty L^{\mu_1}_{i_1}(x) L^{\mu_2}_{i_2}(x) L^{\mu_3}_{i_3}(x)\,\mathrm{e}^{-z\,x} \ud x, \quad z \in \R_{>0}
\end{equation}
defined for $\vect{i} \in \N_0^3$ obey the recurrence relation
\begin{equation}
\label{eq:cmi_z_rec}
\begin{split}
c^\vect{\mu}_\vect{i}(z) =
&- (1/z-1) \left(c^\vect{\mu}_{i_1-1,i_2,i_3}(z) + c^\vect{\mu}_{i_1,i_2-1,i_3}(z) + c^\vect{\mu}_{i_1,i_2,i_3-1}(z)\right)\\
&+ (2/z-1) \left(c^\vect{\mu}_{i_1,i_2-1,i_3-1}(z) + c^\vect{\mu}_{i_1-1,i_2,i_3-1}(z) + c^\vect{\mu}_{i_1-1,i_2-1,i_3}(z)\right)\\
&- (3/z-1) c^\vect{\mu}_{i_1-1,i_2-1,i_3-1}(z)\\
&+ \frac{1}{z} \prod_{k=1}^3 \binom{\mu_k-1+i_k}{i_k}
\end{split}
\end{equation}
with the convention that $c^\vect{\mu}_\vect{i}(z) = 0$ if any $i_1,i_2,i_3 < 0$ and $\binom{i-1}{i} = \delta_{0i}$ for integer $i \ge 0$.
\end{proposition}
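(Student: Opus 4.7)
The cleanest route is via the generating function for associated Laguerre polynomials. Recall that
\[
\sum_{i=0}^\infty L^\mu_i(x)\,t^i = \frac{1}{(1-t)^{\mu+1}}\exp\!\left(\frac{-x\,t}{1-t}\right), \qquad \abs{t}<1.
\]
My plan is to form the trivariate generating function $G(t_1,t_2,t_3;z):=\sum_{\vect{i}\in\N_0^3} c^{\vect{\mu}}_{\vect{i}}(z)\,t_1^{i_1}t_2^{i_2}t_3^{i_3}$, swap sum and integral (justified by the exponential factor $\mathrm{e}^{-zx}$ for $\abs{t_k}$ small enough), and evaluate the resulting Laplace integral in closed form. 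The exponentials combine into $\exp\!\bigl(-x\bigl(z+\sum_{k}\tfrac{t_k}{1-t_k}\bigr)\bigr)$, and integrating over $x\in[0,\infty)$ yields
\[
G(t_1,t_2,t_3;z) = \frac{\prod_{k=1}^3 (1-t_k)^{-(\mu_k+1)}}{z+\sum_{k=1}^3 \frac{t_k}{1-t_k}}.
\]

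Next, I would bring the denominator onto a common denominator $\prod_k(1-t_k)$, which cancels one factor from each $(1-t_k)^{-(\mu_k+1)}$ in the numerator. Writing $s_1,s_2,s_3$ for the elementary symmetric polynomials in $t_1,t_2,t_3$, a short expansion gives
\[
z\prod_{k=1}^3(1-t_k) + \sum_{k=1}^3 t_k \prod_{j\neq k}(1-t_j)
= z - (z-1)s_1 + (z-2)s_2 - (z-3)s_3,
\]
so that $G$ satisfies the clean polynomial identity
\[
\bigl[z-(z-1)s_1+(z-2)s_2-(z-3)s_3\bigr]\,G(t_1,t_2,t_3;z) = \prod_{k=1}^3 (1-t_k)^{-\mu_k}.
\]

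The final step is to read off the coefficient of $t_1^{i_1}t_2^{i_2}t_3^{i_3}$ on both sides. On the left, each factor $s_j$ shifts one index in $\vect{i}$ down by one across all $\binom{3}{j}$ choices, giving exactly the three groups of shifted $c^{\vect{\mu}}$ terms appearing in~\eqref{eq:cmi_z_rec}; terms with a negative index are absent on the left (since $G$ is a power series in $t_k\ge 0$), which justifies the convention $c^{\vect{\mu}}_{\vect{i}}(z)=0$ for negative indices. On the right, the binomial series $(1-t_k)^{-\mu_k}=\sum_{i_k} \binom{\mu_k-1+i_k}{i_k}t_k^{i_k}$ gives the inhomogeneous term $\prod_{k=1}^3\binom{\mu_k-1+i_k}{i_k}$; the convention $\binom{-1}{i}=\delta_{0i}$ correctly captures the case $\mu_k=0$, where $(1-t_k)^0=1$. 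Dividing through by $z$ and rewriting $(1-k/z)=-(k/z-1)$ produces~\eqref{eq:cmi_z_rec} verbatim.

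The only real subtleties will be (i) justifying the swap of summation and integration, which is immediate from absolute convergence for $\abs{t_k}$ small enough, and (ii) the combinatorial bookkeeping in expanding the denominator into elementary symmetric polynomials and tracking signs. Neither is genuinely hard, so the main conceptual step is the initial move to the generating function, which reduces the identity to an algebraic manipulation of rational functions.
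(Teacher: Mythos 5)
Your proposal is correct and follows essentially the same route as the paper: both form the trivariate generating function of the $c^{\vect{\mu}}_{\vect{i}}(z)$ from the Laguerre generating function, reduce it to a rational function whose denominator is a degree-one polynomial in the elementary symmetric functions $s_1,s_2,s_3$ with coefficients $z-1,\,z-2,\,z-3$, and read off the recurrence by comparing coefficients (the paper uses the variable $-t_k$ and the identity $1/(1-x)=1+x/(1-x)$ rather than clearing denominators, but this is only a cosmetic difference). Your version is in fact slightly more self-contained, since you derive the closed form of the generating function via the Laplace integral rather than quoting it.
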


\noindent Thus, $c^\vect{\mu}_\vect{i}(z)$ can iteratively be calculated and stored for later usage. Note that the coefficients $c^\vect{0}_\vect{i}(z)$ are symmetric in $i_1,i_2,i_3$. The case $z = 1$ and $\vect{\mu} = \vect{0}$ is handled in~\cite{LaguerreProducts1960} (with a sign typo in his equation~(7)). From the particular form of the binomial coefficients in~\eqref{eq:cmi_z_rec} it follows that the recurrence relation is homogeneous precisely if any $\mu_k = 0$.

\begin{proof}
A derivation of~\eqref{eq:cmi_z_rec} proceeds along the same lines as in~\cite{LaguerreProducts1960}, involving generating functions of Laguerre polynomials. More specifically, using
\begin{equation*}
\sum_{i=0}^\infty L^\mu_i(x) (-t)^i = \frac{\mathrm{e}^{x\,t/(1+t)}}{(1+t)^{\mu+1}},
\end{equation*}
the following formal series in $t_1, t_2, t_3$ fulfills
\begin{equation}
\label{eq:gen_m}
G^\vect{\mu}(t_1,t_2,t_3,z) := \sum_{i_1,i_2,i_3=0}^\infty c^\vect{\mu}_\vect{i}(z) \prod_{k=1}^3 (-t_k)^{i_k} = z^{-1} \frac{\prod_{k=1}^3(1+t_k)^{-\mu_k}}{1-p(t_1,t_2,t_3,z)}
\end{equation}
with
\begin{equation*}
\begin{split}
&p(t_1,t_2,t_3,z) := (1/z-1)\left(t_1 + t_2 + t_3\right)\\
&\quad + (2/z-1)\left(t_1 t_2 + t_1 t_3 + t_2 t_3\right) + (3/z-1) t_1 t_2 t_3.
\end{split}
\end{equation*}
Applying the identity $1/(1-x) = 1 + x/(1-x)$ for $x = p(t_1,t_2,t_3,z)$ to the right hand side of~\eqref{eq:gen_m} leads to
\begin{equation*}
G^\vect{\mu}(t_1,t_2,t_3,z) = p(t_1,t_2,t_3,z)\,G^\vect{\mu}(t_1,t_2,t_3,z) + \frac{1}{z} \prod_{k=1}^3(1+t_k)^{-\mu_k}.
\end{equation*}
Now comparing coefficients of $t_1^{i_1}\,t_2^{i_2}\,t_3^{i_3}$ gives equation~\eqref{eq:cmi_z_rec}.
\end{proof}

\begin{figure}[!ht]
\centering
\includegraphics[width=0.8\textwidth]{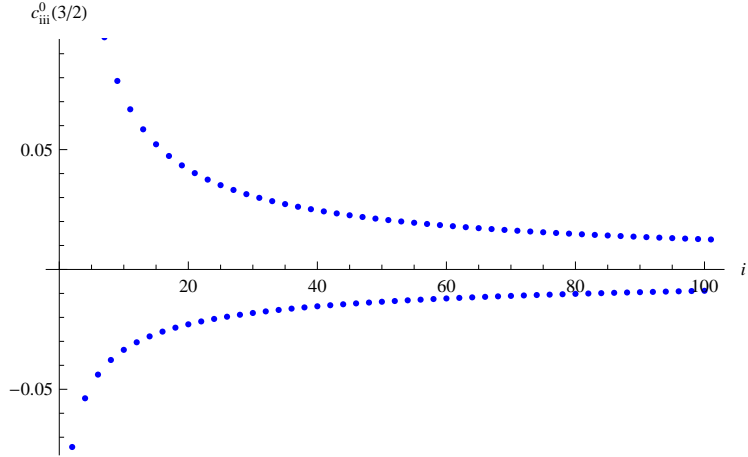}
\caption{Asymptotic behavior of the central coefficient $c^\vect{0}_{i,i,i}(3/2)$ defined in~\eqref{eq:cmi_z}, which oscillates between positive and negative values.}
\label{fig:LaguerreProdCoeff}
\end{figure}
Numeric experimentation suggests that $c^\vect{0}_\vect{i}(z)$ is bounded asymptotically ($\abs{\vect{i}} \to \infty$) if and only if $z \ge 3/2$. As illustration, figure~\ref{fig:LaguerreProdCoeff} shows the central coefficient $c^\vect{0}_{i,i,i}(3/2)$, which alternates its sign depending on the parity of $i$. As heuristic explanation of the asymptotic behavior, we focus on the central coefficient $c^\vect{0}_{i,i,i}(z)$ and set
\begin{equation*}
\tilde{c}_{3i}(z) := c^\vect{0}_{i,i,i}(z), \quad
\tilde{c}_{3i-1}(z) := c^\vect{0}_{i-1,i,i}(z), \quad
\tilde{c}_{3i-2}(z) := c^\vect{0}_{i-1,i-1,i}(z).
\end{equation*}
Plugged into~\eqref{eq:cmi_z_rec} and letting $k := 3 i$ gives
\begin{equation*}
\tilde{c}_k(z) = - 3\,(1/z-1)\,\tilde{c}_{k-1}(z) + 3\,(2/z-1)\,\tilde{c}_{k-2}(z) - (3/z-1)\,\tilde{c}_{k-3}(z).
\end{equation*}
This equation is only correct if $k$ is a multiple of $3$. Nevertheless, interpreted as difference equation yields the companion matrix
\begin{equation*}
Z = \begin{pmatrix}0&1&0\\0&0&1\\-(3/z-1)&3(2/z-1)&-3(1/z-1)\end{pmatrix}
\end{equation*}
with eigenvalues $\{1, 1, -(3/z-1)\}$. Thus, the spectral radius $\rho(Z) \le 1$ precisely if $z \ge 3/2$.

We add the following observation: the homogeneous recurrence relation may be interpreted as a differential equation in 3 dimensions by treating the indices $\vect{i}$ as continuous variables, $c^\vect{\mu}_\vect{i}(z) \equiv f^\vect{\mu}_z(\vect{i})$, and taking the continuity limit. Namely, without the inhomogeneous contribution,~\eqref{eq:cmi_z_rec} becomes
\begin{equation*}
\begin{split}
0 &\stackrel{!}{=} \frac{1}{h^2}\Big[- (1/z-1) \left(f^\vect{\mu}_z(\vect{i}-(h,0,0)) + f^\vect{\mu}_z(\vect{i}-(0,h,0)) + f^\vect{\mu}_z(\vect{i}-(0,0,h))\right)\\
&\qquad\quad + (2/z-1) \left(f^\vect{\mu}_z(\vect{i}-(0,h,h)) + f^\vect{\mu}_z(\vect{i}-(h,0,h)) + f^\vect{\mu}_z(\vect{i}-(h,h,0))\right)\\
&\qquad\quad - (3/z-1) f^\vect{\mu}_z(\vect{i}-(h,h,h)) - f^\vect{\mu}_z(\vect{i})\Big]\\
&= -\frac{1}{z}\left(\partial_{i_2 i_3} + \partial_{i_1 i_3} + \partial_{i_1 i_2}\right)f^\vect{\mu}_z(\vect{i}) + h \left(\frac{3}{2z}-1\right)\partial_{i_1 i_2 i_3}\,f^\vect{\mu}_z(\vect{i}) + \mathcal{O}\left(h^2\right).
\end{split}
\end{equation*}
Here we have already used
\begin{equation*}
\left(\partial_{i_2 i_3} + \partial_{i_1 i_3} + \partial_{i_1 i_2}\right)f^\vect{\mu}_z(\vect{i}) = 0
\end{equation*}
to simplify the $\mathcal{O}(h)$ term, which disappears precisely for $z = 3/2$.

\subsection{Argument Rescaling}
\label{sec:ArgRescale}

Given any fixed $y \in \R_{>0}$, we try to re-express Laguerre expansions~\eqref{eq:HylleraasExpansion} evaluated at the scaled coordinates $y\,x$ as expansions evaluated at $x$. First note the following well-known identity for $k, \mu \in \N_0$,
\begin{equation}
\label{eq:LaguerreArgscale}
L^\mu_k\left(y\,x\right) = y^k \sum_{i=0}^k (1/y-1)^{k-i} \binom{k+\mu}{i+\mu} L^\mu_i(x) \quad \text{for} \quad y \in \R_{>0}.
\end{equation}
Similarly, a direct calculation shows that for all $y \neq 0$,
\begin{equation}
\label{eq:LaguerreArgscaleShifted}
\sum_{i=0}^k (1-y)^{k-i} L^\mu_i(y\,x) = y^k \sum_{i=0}^k (1/y-1)^{k-i} \binom{k+\mu+1}{i+\mu+1} L^\mu_i(x).
\end{equation}
Due to~\eqref{eq:LaguerreArgscale}, for any exponentially decaying sequence $\vect{d} := (d_k)_{k \ge 0}$ it holds that
\begin{equation}
\label{eq:LaguerreNormArgscale}
H^\mu_\vect{d}(y\,x) = y^{\mu/2} \mathrm{e}^{-(y-1)x/2}\,H^\mu_{\vect{d}'}(x), \quad \vect{d}' := S^\mu_y\,\vect{d}
\end{equation}
with the upper triangular matrix $S^\mu_y \equiv (s^\mu_{ik}(y))$ defined by
\begin{equation*}
s^\mu_{ik}(y) := \left(\frac{k!}{i!}\frac{(i+\mu)!}{(k+\mu)!}\right)^{1/2} y^k (1/y-1)^{k-i} \binom{k+\mu}{i+\mu} \quad \text{for} \quad i \le k
\end{equation*}
and $s^\mu_{ik}(y) = 0$ otherwise.

This result can be combined with the operation~\eqref{eq:HylleraasMult}, as follows. Assume we are given $\vect{\mu} \in \N_0^3$ with $\mu_3 = \mu_1 \pm \mu_2$, as well as $z_1, z_2 \in \R_{>0}$ and two exponentially decaying sequences $\vect{d}_1, \vect{d}_2$. Set $z := (z_1 + z_2)/2$ and use~\eqref{eq:HylleraasMultCoeff} to calculate the sequence $\vect{d} \equiv (d_k)_{k \ge 0}$,
\begin{equation*}
d_k := \left(z_1'\right)^{\mu_1/2} \left(z_2'\right)^{\mu_2/2} \hprod{S^{\mu_2}_{z_2'}\,\vect{d}_2}{\Pi^{\vect{\mu}}_k\, S^{\mu_1}_{z_1'}\,\vect{d}_1}, \quad z_i' := z_i/z.
\end{equation*}
Then, combining~\eqref{eq:HylleraasMult} with~\eqref{eq:LaguerreNormArgscale} gives
\begin{equation}
\label{eq:HylleraasMultArgscale}
H^{\mu_1}_{\vect{d}_1}(z_1\,x)\, H^{\mu_2}_{\vect{d}_2}(z_2\,x) = H^{\mu_3}_{\vect{d}}(z\,x).
\end{equation}
Note that the exponential functions on both sides match. Summarizing, we have obtained the product of two Laguerre expansions with rescaled arguments.

As slight variation of~\eqref{eq:LaguerreNormArgscale}, given $y \in \R_{>0}$ and an exponentially decaying sequence $\vect{d}$, we try to find a sequence $\vect{d}'$ such that
\begin{equation}
\label{eq:HylleraasArgscale}
H^\mu_\vect{d}(y\,x) \stackrel{!}{=} H^\mu_{\vect{d}'}(x).
\end{equation}
Due to the orthogonality relation~\eqref{eq:LaguerreOrthogonality}, we have to compute the following integral for integers $i, k \ge 0$. Using equations~\eqref{eq:LaguerreArgscale} and~\eqref{eq:LaguerreOrthogonality} leads to
\begin{multline*}
\int_0^\infty H^\mu_i(x) H^\mu_k(y\,x) \ud x
= \frac{2}{y+1} \left(\frac{2 \sqrt{y}}{y+1}\right)^\mu \binom{i+\mu}{\mu}^{1/2}\binom{k+\mu}{\mu}^{1/2}\\
\times (-1)^k \left(\frac{y-1}{y+1}\right)^{i+k} {_2F_1}\left(-i,-k;1+\mu;-\frac{4y}{(y-1)^2}\right)
\end{multline*}
with the Gaussian hypergeometric function ${_2F_1}$.

\subsection{Integral Identities}
\label{sec:LaguerreExpArcothIntegral}

We derive analytic solutions of integrals originating from Neumann's expansion of $1/\abs{\vect{x}-\vect{y}}$ in terms of Legendre functions (see equation~\eqref{eq:NeumannExpansion} below).

\begin{proposition}
\label{prop:LaguerreExpZInt}
For any $y, z \in \R_{>0}$ and integers $k, \mu \ge 0$, it holds that
\begin{equation}
\label{eq:LaguerreExpZInt}
\begin{split}
&\int_0^z L^\mu_k(y\,x)\,\mathrm{e}^{-x} \ud x = (1-y)^k - L^\mu_k(y\,z)\,\mathrm{e}^{-z}\\
&\quad + \sum_{i=0}^{k-1} (1-y)^{k-1-i} \left(y\,L^\mu_i(y\,z)\,\mathrm{e}^{-z} + \binom{i+\mu}{i+1}\right).
\end{split}
\end{equation}
\end{proposition}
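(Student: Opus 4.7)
The plan is to derive~\eqref{eq:LaguerreExpZInt} by setting up and solving a first-order linear recursion in~$k$ for
\[
I_k(z) := \int_0^z L^\mu_k(y\,x)\,\mathrm{e}^{-x}\ud x.
\]
The starting point is the chain rule combined with the standard identity $(L^\mu_k)'(x) = -L^{\mu+1}_{k-1}(x)$, which yields
\[
\frac{\mathrm{d}}{\mathrm{d}x}\bigl[L^\mu_k(y\,x)\,\mathrm{e}^{-x}\bigr] = -\mathrm{e}^{-x}\bigl[L^\mu_k(y\,x) + y\,L^{\mu+1}_{k-1}(y\,x)\bigr].
\]
Integrating from $0$ to $z$ and using $L^\mu_k(0) = \binom{k+\mu}{k}$ gives
\[
I_k(z) = \binom{k+\mu}{k} - L^\mu_k(y\,z)\,\mathrm{e}^{-z} - y\int_0^z L^{\mu+1}_{k-1}(y\,x)\,\mathrm{e}^{-x}\ud x.
\]
Iterating the elementary contiguous relation $L^{\alpha}_n = L^{\alpha-1}_n + L^{\alpha}_{n-1}$ telescopes to $L^{\mu+1}_{k-1} = \sum_{j=0}^{k-1} L^\mu_j$, so the remaining integral equals $\sum_{j=0}^{k-1} I_j(z)$; this produces a recursion that determines each $I_k(z)$ from its predecessors.

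Next I would pass to the partial sums $S_k(z) := \sum_{j=0}^{k} I_j(z)$, for which the above identity rewrites as the first-order linear recursion
\[
S_k(z) = (1-y)\,S_{k-1}(z) + \binom{k+\mu}{k} - L^\mu_k(y\,z)\,\mathrm{e}^{-z}, \qquad S_{-1}(z) := 0,
\]
whose explicit solution is $S_k(z) = \sum_{i=0}^{k}(1-y)^{k-i}\bigl[\binom{i+\mu}{i} - L^\mu_i(y\,z)\,\mathrm{e}^{-z}\bigr]$. Differencing via $I_k = S_k - S_{k-1}$ then yields a closed form for $I_k(z)$; the $\mathrm{e}^{-z}$-contributions, namely $-L^\mu_k(y\,z)\,\mathrm{e}^{-z}$ and $y\sum_{i=0}^{k-1}(1-y)^{k-1-i}L^\mu_i(y\,z)\,\mathrm{e}^{-z}$, coincide termwise with those appearing in~\eqref{eq:LaguerreExpZInt}.

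The main obstacle is then verifying that the remaining binomial parts also match, which reduces to the purely algebraic identity
\[
\binom{k+\mu}{k} - y\sum_{i=0}^{k-1}(1-y)^{k-1-i}\binom{i+\mu}{i} = (1-y)^k + \sum_{i=0}^{k-1}(1-y)^{k-1-i}\binom{i+\mu}{i+1}.
\]
This succumbs to a short induction on~$k$: subtracting the statement at $k$ from that at $k+1$ and applying $(1-y)^{k-i}-(1-y)^{k-1-i} = -y(1-y)^{k-1-i}$ collapses the sums so that, after invoking the inductive hypothesis, everything cancels down to Pascal's rule $\binom{k+1+\mu}{k+1}=\binom{k+\mu}{k}+\binom{k+\mu}{k+1}$. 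The base case $k=0$ is the trivial $1=1$, which completes the derivation of~\eqref{eq:LaguerreExpZInt}.
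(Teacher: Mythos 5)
Your proof is correct, but it takes a genuinely different route from the paper. The paper first establishes the special case $y=1$, namely equation~\eqref{eq:LaguerreExp1Int}, by differentiating both sides, and then transports the result to general $y$ via the argument-rescaling identities~\eqref{eq:LaguerreArgscale} and~\eqref{eq:LaguerreArgscaleShifted} (the latter applied with $k-1$), i.e.\ it reuses the machinery of section~\ref{sec:ArgRescale}. You instead work with general $y$ from the start: differentiating $L^\mu_k(y\,x)\,\mathrm{e}^{-x}$ via $(L^\mu_k)' = -L^{\mu+1}_{k-1}$, telescoping $L^{\mu+1}_{k-1} = \sum_{j=0}^{k-1} L^\mu_j$ to close the system, solving the resulting first-order recursion for the partial sums $S_k$, and finally checking a purely combinatorial binomial identity by induction (which does reduce to Pascal's rule, as you claim --- both sides of your identity satisfy the same recursion $X_{k+1} = (1-y)X_k + \binom{k+\mu}{k+1}$ with the same initial value). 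I verified the identity for small $k$ and the inductive step; it holds. Your argument is self-contained and makes no use of the rescaling lemmas, which is a virtue if one wants the proposition in isolation; the cost is the extra combinatorial verification at the end, which the paper's route absorbs into identities it has already established for other purposes. Either proof is acceptable.
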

\begin{proof}
First note that for $y = 1$, equation~\eqref{eq:LaguerreExpZInt} simplifies to $1 - \mathrm{e}^{-z}$ for $k = 0$ and
\begin{equation}
\label{eq:LaguerreExp1Int}
\int_0^z L^\mu_k(x)\,\mathrm{e}^{-x} \ud x = \left(L^\mu_{k-1}(z) - L^\mu_k(z)\right)\mathrm{e}^{-z} + \binom{k+\mu-1}{k} \quad \text{for} \quad k \ge 1.
\end{equation}
This identity can be proven by taking derivatives on both sides. Then, combining~\eqref{eq:LaguerreArgscale} and~\eqref{eq:LaguerreArgscaleShifted} (for $k-1$) with~\eqref{eq:LaguerreExp1Int} leads to~\eqref{eq:LaguerreExpZInt}.
\end{proof}

For any integers $\mu, k, \tilde{k} \ge 0$, consider the nested integrals
\begin{equation}
\label{eq:LaguerreNestedExpInt}
t^\mu_{k \tilde{k}} := \frac{1}{2} \int_0^\infty L^\mu_{\tilde{k}}(\tilde{x})\,\mathrm{e}^{-\tilde{x}/2} \int_0^{\tilde{x}} L^\mu_k(x)\,\mathrm{e}^{-x/2} \ud x \ud \tilde{x}.
\end{equation}
They have a surprisingly simple form for $\mu = 0, 1$:
\begin{proposition}
\label{prop:LaguerreNestedExpInt}
The integrals $t^0_{k \tilde{k}}$ in~\eqref{eq:LaguerreNestedExpInt} are equal to
\begin{equation*}
t^0_{k\tilde{k}} = \begin{cases}
1& k = \tilde{k}\\
2\,(-1)^{k+\tilde{k}}& k < \tilde{k}\\
0& k > \tilde{k}
\end{cases}
\qquad\qquad\quad
\left(t^0_{k\tilde{k}}\right) =
\left(\begin{smallmatrix}
1& -2& \phantom{-}2& -2&\\
0& \phantom{-}1& -2& \phantom{-}2&\\
0& \phantom{-}0& \phantom{-}1& -2&\cdots\\
0& \phantom{-}0& \phantom{-}0& \phantom{-}1&\\
&&\vdots&&
\end{smallmatrix}\right),
\end{equation*}
and the integrals $t^1_{k \tilde{k}}$
\begin{equation*}
t^1_{k\tilde{k}} = \begin{cases}
(-1)^{\tilde{k}}& \text{$k \le \tilde{k}$ and $k$ even}\\
1& \text{$k > \tilde{k}$ and $\tilde{k}$ even}\\
0& \text{otherwise}
\end{cases}
\qquad
\left(t^1_{k\tilde{k}}\right) =
\left(\begin{smallmatrix}
1& -1& \phantom{-}1& -1&\\
1& \phantom{-}0& \phantom{-}0& \phantom{-}0&\\
1& \phantom{-}0& \phantom{-}1& -1&\cdots\\
1& \phantom{-}0& \phantom{-}1& \phantom{-}0&\\
&&\vdots&&
\end{smallmatrix}\right).
\end{equation*}
\end{proposition}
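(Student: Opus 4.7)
My plan is to reduce the nested integral in~\eqref{eq:LaguerreNestedExpInt} to a linear combination of one-dimensional integrals that can be evaluated in closed form via generating functions. The key input is Proposition~\ref{prop:LaguerreExpZInt}, which directly evaluates the inner antiderivative.

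First I would substitute $u = x/2$ in the inner integral and invoke Proposition~\ref{prop:LaguerreExpZInt} with $y = 2$ and $z = \tilde{x}/2$. After accounting for the factor of $2$ from the substitution, this yields
$$\int_0^{\tilde{x}} L^\mu_k(x)\,\mathrm{e}^{-x/2}\ud x = 2D^\mu_k - 2L^\mu_k(\tilde{x})\,\mathrm{e}^{-\tilde{x}/2} + 4\sum_{i=0}^{k-1}(-1)^{k-1-i}L^\mu_i(\tilde{x})\,\mathrm{e}^{-\tilde{x}/2},$$
with the $\tilde{x}$-independent constant $D^\mu_k := (-1)^k + \sum_{i=0}^{k-1}(-1)^{k-1-i}\binom{i+\mu}{i+1}$. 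A direct summation (the binomial equals $0$ when $\mu=0$ and $1$ when $\mu=1$) gives $D^0_k = (-1)^k$ and $D^1_k = \tfrac{1+(-1)^k}{2}$.

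Inserting this into the outer integral and using linearity reduces $t^\mu_{k\tilde{k}}$ to the two one-dimensional families $I_1(n) := \int_0^\infty L^\mu_n(x)\,\mathrm{e}^{-x/2}\ud x$ and $I_2(m,n) := \int_0^\infty L^\mu_m(x)\,L^\mu_n(x)\,\mathrm{e}^{-x}\ud x$ via
$$t^\mu_{k\tilde{k}} = D^\mu_k\,I_1(\tilde{k}) - I_2(k,\tilde{k}) + 2\sum_{i=0}^{k-1}(-1)^{k-1-i}\,I_2(i,\tilde{k}).$$
I would evaluate both families with the standard generating function $\sum_{n\ge 0}L^\mu_n(x)\,t^n = (1-t)^{-\mu-1}\mathrm{e}^{-xt/(1-t)}$. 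A term-by-term Laplace transform gives $\sum_n I_1(n)\,t^n = 2(1-t)^{-\mu}/(1+t)$; extracting coefficients yields $I_1(n) = 2(-1)^n$ for $\mu=0$ and, for $\mu=1$, $I_1(n)$ equals $2$ if $n$ is even and $0$ if $n$ is odd. The bivariate analogue relies on the algebraic identity $1 + t/(1-t) + s/(1-s) = (1-ts)/[(1-t)(1-s)]$, which produces $\sum_{m,n}I_2(m,n)\,t^m s^n = [(1-t)(1-s)]^{-\mu}/(1-ts)$. For $\mu=0$ this reduces to Laguerre orthogonality $I_2(m,n)=\delta_{mn}$, while for $\mu=1$ expanding the geometric series $1/(1-ts)=\sum_k(ts)^k$ yields $I_2(m,n) = \min(m,n)+1$.

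What remains is elementary case analysis. For $\mu=0$ the Kronecker delta collapses the sum to at most one surviving term, and the identity $(-1)^{k+\tilde{k}}+(-1)^{k-1-\tilde{k}}=0$ handles the case $k>\tilde{k}$ to recover the three-case formula. For $\mu=1$ I would split the alternating sum $\sum_{i=0}^{k-1}(-1)^{k-1-i}(\min(i,\tilde{k})+1)$ at $i=\tilde{k}$, use the telescoping identity $\sum_{j=0}^{N-1}(-1)^j = \tfrac{1-(-1)^N}{2}$ to reduce the resulting partial sums, and then separate the four parity combinations of $(k,\tilde{k})$. The main obstacle is pure bookkeeping: checking that the three contributions combine to exactly $0$ or $\pm 1$ in the pattern claimed. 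Conceptually, the non-orthogonality $I_2(m,n)=\min(m,n)+1$ in the $\mu=1$ case is the genuinely substantive input; everything else is algebraic manipulation.
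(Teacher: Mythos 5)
Your proposal is correct and follows essentially the same route as the paper's (one-line) proof: apply Proposition~\ref{prop:LaguerreExpZInt} with $y=2$ to the inner integral and then evaluate the resulting products against the outer Laguerre factor. Your generating-function evaluation of $\int_0^\infty L^\mu_m L^\mu_n\,\mathrm{e}^{-x}\ud x$ (giving $\delta_{mn}$ for $\mu=0$ and $\min(m,n)+1$ for $\mu=1$, where the weight $\mathrm{e}^{-x}$ does not match the $L^1$ orthogonality weight $x\,\mathrm{e}^{-x}$) is exactly the "orthogonality" step the paper alludes to, carried out correctly.
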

\begin{proof}
These identities can be proven by applying proposition~\ref{prop:LaguerreExpZInt} to the inner integral and using the orthogonality property of the Laguerre polynomials.
\end{proof}

For the following paragraph, we state
\begin{definition}
Given integers $\mu \in \N_0$ and $1 \le i \le k$, set
\begin{equation}
\label{eq:hki}
h^\mu_{ki}(y) := \sum_{n=i}^k (-y)^{n-i} \binom{k+\mu}{n+\mu}\Big/\binom{n}{i}, \quad y \in \R.
\end{equation}
Expressed in terms of generalized hypergeometric functions,
\begin{equation*}
h^\mu_{ki}(y) = \binom{k+\mu}{i+\mu}\, {_3F_2}\begin{pmatrix}1&1&i-k&\multirow{2}{*}{; y}\\1+i&1+i+\mu\end{pmatrix}.
\end{equation*}
\end{definition}

Given $z \in \R_{>0}$, $y \in \R$ and integers $k, \mu \ge 0$, we set out to solve the integral
\begin{equation}
\label{eq:LaguerreArcothInt}
\int_0^\infty L^\mu_k(y\,x)\,\mathrm{e}^{-x}\,\log\left(1+\frac{z}{x}\right) \ud x.
\end{equation}
For that purpose, we decompose the logarithm into $\log(1+x/z) - \log(x/z)$. Considering the first term, integration by parts and~\eqref{eq:LaguerreExpZInt} give
\begin{equation*}
\begin{split}
&\int_0^\infty \log(1+x/z)\, L^\mu_k(y\,x)\,\mathrm{e}^{-x} \ud x
= \int_0^\infty \frac{1}{x+z} L^\mu_k(y\,x) \ud x\\
&\quad - y \sum_{i=0}^{k-1} (1-y)^{k-1-i} \int_0^\infty \frac{1}{x+z} L^\mu_i(y\,x)\mathrm{e}^{-x} \ud x.
\end{split}
\end{equation*}
The integrals on the right hand side are solved by the following proposition:

\begin{proposition}
\label{prop:InvXZLaguerreInt}
Let $z \in \R_{>0}$ and $y \in \R$, then for all integers $k, \mu \ge 0$,
\begin{equation}
\label{eq:InvXZLaguerreInt}
\int_0^\infty \frac{1}{x+z}\, L^\mu_k(y\,x)\,\mathrm{e}^{-x} \ud x = L^\mu_k(-y\,z)\,\Gamma(0,z)\,\mathrm{e}^{z} - \frac{1}{z} \sum_{i=1}^k h^\mu_{ki}(y) \frac{(y\,z)^i}{i!}
\end{equation}
with the incomplete gamma function $\Gamma$.
\end{proposition}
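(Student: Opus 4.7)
The plan is to expand $L^\mu_k(yx)$ as a polynomial in $x$ and reduce the integral to standard moment integrals against $(x+z)^{-1}e^{-x}$. The closed-form of the $x^n/(x+z)$ moment then naturally produces an exponential integral term plus a polynomial remainder; collecting the two contributions yields the two terms in~\eqref{eq:InvXZLaguerreInt}.

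First, I would write out the explicit representation
$$L^\mu_k(yx) = \sum_{n=0}^k \binom{k+\mu}{k-n} \frac{(-yx)^n}{n!} = \sum_{n=0}^k \binom{k+\mu}{n+\mu} \frac{(-y)^n x^n}{n!},$$
so the left hand side of~\eqref{eq:InvXZLaguerreInt} becomes a finite linear combination of the integrals $I_n(z) := \int_0^\infty \frac{x^n}{x+z} e^{-x}\,\ud x$. For each $n$, polynomial division gives $\frac{x^n - (-z)^n}{x+z} = \sum_{j=0}^{n-1} x^j (-z)^{n-1-j}$, hence
$$I_n(z) = (-z)^n \int_0^\infty \frac{e^{-x}}{x+z}\,\ud x + \sum_{j=0}^{n-1} (-z)^{n-1-j} j!.$$
The first integral equals $e^z\,\Gamma(0,z)$ by the substitution $t = x+z$, which produces the exponential-integral factor appearing in the target formula.

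Next I would reassemble the two pieces separately. The $e^z\Gamma(0,z)$ contribution collects to
$$e^z\Gamma(0,z) \sum_{n=0}^k \binom{k+\mu}{n+\mu} \frac{(yz)^n}{n!} = L^\mu_k(-yz)\,\Gamma(0,z)\,e^z,$$
using the same Laguerre expansion with argument $-yz$. This gives the first summand in~\eqref{eq:InvXZLaguerreInt} for free.

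The main obstacle is the combinatorial identification of the remaining double sum with the $h^\mu_{ki}(y)$ coefficients defined in~\eqref{eq:hki}. Interchanging the order of summation via $\sum_{n=1}^k \sum_{j=0}^{n-1} = \sum_{i=1}^k \sum_{n=i}^k$ with $i=n-j$, and using $n!/(n-i)! = i!\binom{n}{i}$, the remainder term becomes
$$\sum_{i=1}^k \frac{(-z)^{i-1}(-y)^i}{i!} \sum_{n=i}^k \binom{k+\mu}{n+\mu} \frac{(-y)^{n-i}}{\binom{n}{i}} = -\frac{1}{z}\sum_{i=1}^k h^\mu_{ki}(y)\,\frac{(yz)^i}{i!},$$
where the sign simplification uses $(-z)^{i-1}(-y)^i = -z^{i-1}y^i = -\tfrac{1}{z}(yz)^i$. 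This is precisely the second summand in~\eqref{eq:InvXZLaguerreInt}, completing the proof. The only subtlety worth double-checking is the $n=0$ boundary: for $n=0$ the inner sum over $j$ is empty, so the remainder contributes nothing, consistent with the $i\ge 1$ lower limit in the final expression.
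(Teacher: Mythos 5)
Your proof is correct, and it takes a genuinely different route from the paper. The paper represents the moment integrals as $z$-derivatives, namely $\int_0^\infty \tfrac{1}{x+z}(y\,x)^i\,\mathrm{e}^{-x}\ud x = (-y\,z)^i \tfrac{\ud^i}{\ud z^i}\bigl[\Gamma(0,z)\,\mathrm{e}^z\bigr]$, and then extracts the closed form by a ``rather tedious'' repeated differentiation using $\ud_z\,\Gamma(0,z) = -\mathrm{e}^{-z}/z$; the combinatorics of the $h^\mu_{ki}$ coefficients stay hidden inside that calculation. You instead perform the polynomial division $\tfrac{x^n-(-z)^n}{x+z} = \sum_{j=0}^{n-1} x^j(-z)^{n-1-j}$, which cleanly splits each moment $I_n(z)$ into $(-z)^n\,\mathrm{e}^z\Gamma(0,z)$ plus the explicit finite sum $\sum_{j=0}^{n-1}(-z)^{n-1-j}j!$. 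The $\Gamma$-contributions then resum to $L^\mu_k(-y\,z)$ immediately from the Laguerre polynomial's explicit coefficients, and the remainder reorganizes, after the interchange $i=n-j$ and the identity $(n-i)!/n! = 1/(i!\binom{n}{i})$, into exactly the definition~\eqref{eq:hki} of $h^\mu_{ki}(y)$. Your sign bookkeeping $(-z)^{i-1}(-y)^i = -\tfrac{1}{z}(y\,z)^i$ and the $n=0$ boundary check are both right. The payoff of your approach is that it replaces the opaque differentiation step with a transparent combinatorial identity and in effect explains where the $h^\mu_{ki}$ come from; the paper's derivative representation is shorter to state and better suited to verification by a computer algebra system, which is how the paper handles the base case and the final simplification.
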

\begin{proof}
For $k = 0$, we obtain (via a computer algebra system)
\begin{equation*}
\int_0^\infty \frac{1}{x+z}\,\mathrm{e}^{-x} \ud x = \Gamma(0,z)\,\mathrm{e}^{z}
\end{equation*}
in agreement with the right hand side of~\eqref{eq:InvXZLaguerreInt}. For $k \ge 1$, a change of variables yields
\begin{equation*}
\int_0^\infty \frac{1}{x+z}\,(y\,x)^i\,\mathrm{e}^{-x} \ud x = \int_0^\infty \frac{1}{x+1}\,(y\,z\,x)^i\,\mathrm{e}^{-z\,x} \ud x = (-y\,z)^i \frac{\ud^i}{\ud z^i} \Gamma(0,z)\,\mathrm{e}^{z}
\end{equation*}
for any integer $i \ge 0$. Thus, the integral~\eqref{eq:InvXZLaguerreInt} is a linear combination of the last term ($i = 0,\dots,k$). The explicit formula on the right hand side of~\eqref{eq:InvXZLaguerreInt} follows from a (rather tedious) calculation, using $\ud_z\,\Gamma(0,z) = -\mathrm{e}^{-z}/z$.
\end{proof}

Concerning the second logarithm $\log(x/z)$ in the above decomposition, first note that \begin{equation*}
f_i(z) := \int_0^\infty \log(x/z)\,\frac{x^i}{i!}\,\mathrm{e}^{-x} \ud x = H_i - (\gamma + \log(z)),
\end{equation*}
where $\gamma$ is Euler's constant and $H_i$ the $i^\mathrm{th}$ Harmonic number. Namely, integration by parts yields the recurrence relation
\begin{equation*}
f_i(z) = \frac{1}{i} + f_{i-1}(z), \quad i = 1,2,\dots,
\end{equation*}
and $f_0(z) = -(\gamma + \log(z))$ can be shown by a computer algebra system. Thus, for all integers $k, \mu \ge 0$,
\begin{equation*}
\int_0^\infty \log(x/z) L^\mu_k(x)\,\mathrm{e}^{-x} \ud x = \sum_{i=1}^k \binom{k+\mu}{i+\mu} (-1)^i H_i - \binom{k+\mu-1}{k} (\gamma + \log(z)).
\end{equation*}
Combining this equation with~\eqref{eq:LaguerreArgscale} yields the following generalization:
\begin{proposition}
Let $z \in \R_{>0}$ and $y \in \R$, then for all integers $k, \mu \ge 0$,
\begin{multline*}
\int_0^\infty \log\left(x/z\right) L^\mu_k(y\,x)\,\mathrm{e}^{-x} \ud x = \sum_{i=1}^k \binom{k+\mu}{i+\mu} (-y)^i H_i\\
- \binom{k+\mu}{k} \left((1-y)^k + \mu \sum_{i=1}^k \binom{k}{i} \frac{y^i (1-y)^{k-i}}{i+\mu}\right) \left(\gamma + \log(z)\right).
\end{multline*}
\end{proposition}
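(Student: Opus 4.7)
The plan is to reduce the statement to the already-established special case $y=1$ by expanding $L^\mu_k(yx)$ in the Laguerre basis via the argument-rescaling identity~\eqref{eq:LaguerreArgscale}:
\[
L^\mu_k(yx) = y^k \sum_{i=0}^k (1/y-1)^{k-i}\binom{k+\mu}{i+\mu}\,L^\mu_i(x).
\]
Plugging this into the integral and applying termwise the $y=1$ formula stated just above the proposition,
\[
\int_0^\infty \log(x/z)\,L^\mu_i(x)\,\mathrm{e}^{-x}\ud x = \sum_{j=1}^i \binom{i+\mu}{j+\mu}(-1)^j H_j - \binom{i+\mu-1}{i}(\gamma+\log z),
\]
splits the claim into two independent combinatorial identities: one for the coefficient of each $H_j$, and one for the overall coefficient of $(\gamma+\log z)$.

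For the $H_j$-coefficient, after swapping the order of summation and using $\binom{k+\mu}{i+\mu}\binom{i+\mu}{j+\mu} = \binom{k+\mu}{j+\mu}\binom{k-j}{i-j}$, I would reduce matters to
\[
y^k \sum_{i=j}^k (1/y-1)^{k-i}\binom{k-j}{i-j} = y^j,
\]
which is immediate from the binomial theorem applied to $(1+(1/y-1))^{k-j}=(1/y)^{k-j}$. Combined with the factor $(-1)^j$ inherited from the $y=1$ formula, this yields exactly the desired $\binom{k+\mu}{j+\mu}(-y)^j H_j$.

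For the $(\gamma+\log z)$-coefficient the computation is slightly more delicate because $\binom{i+\mu-1}{i}$ behaves differently at $\mu=0$ and $\mu\ge 1$. I would first isolate the $i=0$ term, which contributes $y^k(1/y-1)^k\binom{k+\mu}{\mu} = \binom{k+\mu}{k}(1-y)^k$, matching the first summand inside the bracket on the right. For $i\ge 1$ and $\mu\ge 1$, a direct factorial rearrangement gives
\[
\binom{k+\mu}{i+\mu}\binom{i+\mu-1}{i} = \binom{k+\mu}{k}\binom{k}{i}\frac{\mu}{i+\mu},
\]
so that the remaining contributions collapse, after pulling $y^i\cdot y^{k-i}(1/y-1)^{k-i} = y^i(1-y)^{k-i}$, into precisely $\binom{k+\mu}{k}\mu\sum_{i=1}^k\binom{k}{i}\frac{y^i(1-y)^{k-i}}{i+\mu}$. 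For $\mu=0$ the convention $\binom{i-1}{i}=\delta_{0i}$ kills all $i\ge 1$ terms, which is consistent with the explicit overall factor $\mu$ on the right-hand side.

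The main obstacle I anticipate is purely bookkeeping around the $\mu=0$ versus $\mu\ge 1$ split in the $(\gamma+\log z)$ term; the analytic content has already been extracted in the previously proven $y=1$ identity and in equation~\eqref{eq:LaguerreArgscale}, so once the two binomial identities above are verified the proposition follows.
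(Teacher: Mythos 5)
Your proposal is correct and is exactly the paper's argument: the paper's entire proof is the one-line remark that the $y=1$ identity combined with the rescaling formula~\eqref{eq:LaguerreArgscale} yields the generalization, and you have supplied the two binomial identities (both of which check out, including the $\mu=0$ bookkeeping) that make this combination work.
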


Hence we have collected all ingredients for solving the integral~\eqref{eq:LaguerreArcothInt} in closed form.

We can now assemble the above results to calculate the nested integral
\begin{equation}
\label{eq:LaguerreNestedArcothInt}
\int_0^\infty L^\mu_{\tilde{k}}(\tilde{x})\,\mathrm{e}^{-\tilde{x}/2}\,\mathrm{arcoth}(1+\tilde{x}/z) \int_0^{\tilde{x}} L^\mu_k(x)\,\mathrm{e}^{-x/2} \ud x \ud \tilde{x}, \quad z \in \R_{>0}
\end{equation}
for integers $k, \tilde{k}, \mu \in \N_0$. Proposition~\ref{prop:LaguerreExpZInt} with $y = 2$ and a change of variables ($2x \to x$) gives the inner integral. Combined with the Laguerre product coefficients $c^\vect{\mu}_\vect{i}(1)$ in~\eqref{eq:cmi_z},
\begin{equation*}
\begin{split}
\eqref{eq:LaguerreNestedArcothInt}
&= 2\,q^\mu_{1,k} \int_0^\infty L^\mu_{\tilde{k}}(2x)\, \mathrm{e}^{-x} \log\left(1+\frac{z}{x}\right) \ud x\\
&\, -\sum_{j=0}^{k+\tilde{k}} q^\mu_{2,k\tilde{k}j} \int_0^\infty L^0_j(x)\, \mathrm{e}^{-x} \log\left(1+\frac{2z}{x}\right) \ud x
\end{split}
\end{equation*}
with the integer (!) coefficients
\begin{align*}
q^\mu_{1,k} &:= (-1)^k + \sum_{i=0}^{k-1} (-1)^{k-1-i} \binom{i+\mu}{i+1},\\
q^\mu_{2,k\tilde{k}j} &:= c^{\mu \mu 0}_{k \tilde{k} j}(1) + 2 \sum_{i=j-\tilde{k}}^{k-1} (-1)^{k-i}\,c^{\mu \mu 0}_{i \tilde{k} j}(1).
\end{align*}
The two above integrals are precisely of the form~\eqref{eq:LaguerreArcothInt}, which completes the calculation of~\eqref{eq:LaguerreNestedArcothInt}.

\medskip

As last task of this section, given $k, \mu \in \N_0$ and $z \in \R_{>0}$, we try to compute the Laguerre expansion coefficients of
\begin{equation*}
\frac{1}{\sqrt{2z+x}} H^\mu_k(x).
\end{equation*}
In other words, due to the orthogonality property of Laguerre polynomials, we have to calculate the integrals
\begin{equation}
\label{eq:SqrtHylleraasInt}
\int_0^\infty \frac{1}{\sqrt{2z+x}} H^\mu_k(x) H^\mu_{\tilde{k}}(x) \ud x
\end{equation}
for $\tilde{k} = 0,1,\dots$. Employing the Laguerre product coefficients $b^{\mu\mu 0}_{k\tilde{k}i}(1)$ from~\eqref{eq:bmi_diff}, the above integral can be reduced to a linear combination of
\begin{equation}
\label{eq:w_k_z}
w_k(z) := \int_0^\infty \frac{1}{\sqrt{2z+x}} L_k(x)\,\mathrm{e}^{-x} \ud x, \quad z \in \R_{>0}.
\end{equation}
We can calculate these integrals iteratively for $k = 0,1,2,\dots$ via the following

\begin{proposition}
The functions $w_k(z)$ obey the recurrence relation
\begin{equation}
\label{eq:w_k_z_recurrence}
w_k(z) = w_{k-1}(z) + \frac{z \sqrt{z}}{k} \frac{\ud}{\ud z}\frac{w_{k-1}(z)}{\sqrt{z}}, \quad k = 1,2,\dots
\end{equation}
with the starting value
\begin{equation*}
w_0(z) = \sqrt{\pi}\,\mathrm{erfc}\left(\sqrt{2\,z}\right) \mathrm{e}^{2\,z},
\end{equation*}
where $\mathrm{erfc}$ is the complementary error function.
\end{proposition}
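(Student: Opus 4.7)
The first task is to verify the starting value. This is a direct computation: the substitution $u = 2z+x$ turns $w_0(z)$ into $e^{2z} \int_{2z}^\infty u^{-1/2} e^{-u} \ud u$, and the further change of variables $u = t^2$ identifies this with $2\, e^{2z} \int_{\sqrt{2z}}^\infty e^{-t^2} \ud t = \sqrt{\pi}\, e^{2z}\, \mathrm{erfc}\bigl(\sqrt{2z}\bigr)$.

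For the recurrence, my first move is to expand the derivative on the right-hand side, which reduces the claim to the equivalent form
\[
k\, w_k(z) = \bigl(k - \tfrac{1}{2}\bigr)\, w_{k-1}(z) + z\, w_{k-1}'(z), \qquad k \ge 1,
\]
where $w_{k-1}'(z) = -\int_0^\infty L_{k-1}(x)\, e^{-x}\,(2z+x)^{-3/2}\,\ud x$. The standard Laguerre identity $x\, L_k'(x) = k\bigl(L_k(x) - L_{k-1}(x)\bigr)$ immediately gives
\[
k \bigl(w_k(z) - w_{k-1}(z)\bigr) = \int_0^\infty \frac{x\, L_k'(x)\, e^{-x}}{\sqrt{2z+x}}\,\ud x,
\]
so only the right-hand side needs to be manipulated. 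Integrating by parts (moving the derivative from $L_k'(x)$ onto $x\, e^{-x}/\sqrt{2z+x}$, with vanishing boundary terms) produces three integrals: $w_k(z)$, $A := \int_0^\infty x\, L_k(x)\, e^{-x}/\sqrt{2z+x}\,\ud x$, and $B := \int_0^\infty x\, L_k(x)\, e^{-x}/(2z+x)^{3/2}\,\ud x$. I would then reduce $A$ via the three-term Laguerre recurrence $x\, L_k(x) = -(k+1) L_{k+1}(x) + (2k+1) L_k(x) - k\, L_{k-1}(x)$, obtaining $A = -(k+1)\, w_{k+1} + (2k+1)\, w_k - k\, w_{k-1}$; and I would handle $B$ by splitting $x = (2z+x) - 2z$ and using the definition of $w_k'(z)$, obtaining $B = w_k(z) + 2z\, w_k'(z)$. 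Collecting terms gives $(k+1)\, w_{k+1}(z) = \bigl(k + \tfrac{1}{2}\bigr)\, w_k(z) + z\, w_k'(z)$, which is the target recurrence after relabeling $k \to k-1$.

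The main obstacle is really just algebraic bookkeeping: the derivative $\tfrac{d}{dx}\bigl[x\, e^{-x}/\sqrt{2z+x}\bigr]$ produces a mix of $1/\sqrt{2z+x}$ and $1/(2z+x)^{3/2}$ terms, and each piece has to be treated with a different tool---the three-term recurrence on one side, parameter differentiation on the other. Convergence and boundary issues are trivial since $L_k$ is polynomial and $e^{-x}$ decays; the work is entirely in matching up coefficients so that the $w_{k+1}$ contribution emerges cleanly as the next iterate rather than as a spurious correction.
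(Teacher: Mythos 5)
Your proof is correct, but it follows a genuinely different route from the paper's. The paper first performs the change of variables $x \to x/z$ to obtain $w_k(z) = \sqrt{z}\int_0^\infty (2+x)^{-1/2} L_k(z\,x)\,\mathrm{e}^{-z\,x}\ud x$, so that the parameter $z$ sits inside the Laguerre polynomial and the exponential; a single differentiation under the integral sign, via the identity $\tfrac{\ud}{\ud z}\,L_k(z\,x)\,\mathrm{e}^{-z\,x} = \tfrac{k+1}{z}\left(L_{k+1}(z\,x)-L_k(z\,x)\right)\mathrm{e}^{-z\,x}$, then yields the recurrence in one line, with the $\sqrt{z}$ prefactor explaining the $w_{k-1}/\sqrt{z}$ inside the derivative. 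You instead keep $z$ in the denominator $\sqrt{2z+x}$, rewrite $k(w_k-w_{k-1})$ via $x\,L_k'(x)=k\left(L_k(x)-L_{k-1}(x)\right)$, integrate by parts in $x$, and then dispatch the resulting pieces with the three-term recurrence (for the $x/\sqrt{2z+x}$ term) and parameter differentiation (for the $x/(2z+x)^{3/2}$ term, after splitting $x=(2z+x)-2z$). I checked the bookkeeping: $-w_k + A + \tfrac12 B$ with $A=-(k+1)w_{k+1}+(2k+1)w_k-k\,w_{k-1}$ and $B=w_k+2z\,w_k'$ does collapse to $(k+1)\,w_{k+1}=(k+\tfrac12)\,w_k+z\,w_k'$, which is the claimed recurrence after shifting the index, and your equivalent form $k\,w_k=(k-\tfrac12)w_{k-1}+z\,w_{k-1}'$ correctly unpacks the $z\sqrt{z}\,\tfrac{\ud}{\ud z}(w_{k-1}/\sqrt{z})$ expression. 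The paper's rescaling trick is shorter and makes the structure of the recurrence transparent; your version is more computational but stays entirely within the original representation and uses only the most standard Laguerre identities, which some readers may find easier to verify step by step.
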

\begin{proof}
The formula for $w_0(z)$ can be derived via a computer algebra system. Concerning the recurrence relation, first integrate~\eqref{eq:w_k_z} by parts ($x \to x/z$) to obtain the alternative representation
\begin{equation*}
w_k(z) = \sqrt{z} \int_0^\infty \frac{1}{\sqrt{2+x}} L_k(z\,x)\, \mathrm{e}^{-z\,x} \ud x.
\end{equation*}
Applying the relation
\begin{equation}
\label{eq:LaguerreExpDiff}
\frac{\ud}{\ud z} L_k(z\,x)\, \mathrm{e}^{-z\,x} = \frac{k+1}{z} \left(L_{k+1}(z\,x) - L_k(z\,x)\right) \mathrm{e}^{-z\,x}
\end{equation}
to this representation gives the recurrence formula~\eqref{eq:w_k_z_recurrence}. The relation~\eqref{eq:LaguerreExpDiff} follows from combining
\begin{align*}
\frac{\ud}{\ud z} L_k(z)\, \mathrm{e}^{-z} &= -\sum_{i=0}^k L_i(z)\, \mathrm{e}^{-z} \quad \text{with}\\
x \cdot L_i(x) &= -(i + 1) L_{i+1}(x) + (2i+1) L_i(x) - i\,L_{i-1}(x).
\end{align*}
\end{proof}

\section{Coulomb and Exchange Integrals of Prolate Spheroidal Orbitals}
\label{sec:Coulomb}

The computation of Coulomb interactions is often the most demanding task concerning multi-electron quantum systems. In this section, we provide the details of an efficient algorithmic implementation, which employs analytically precomputed integrals (from section~\ref{sec:LaguerreExpansions}) and a subsequent Taylor expansion to speed up calculations, and avoids difficulties caused by an alternative numeric approach. For example, we observe that the absolute value of the nested integrals in equation~\eqref{eq:RadialCoulombInt} is typically much smaller than (the maximum over $x$) of the inner integral. This general effect could be explained by the orthogonality property of Laguerre polynomials. In any case, analytically solving the nested integrals as a whole circumvents the numeric difficulties caused by the blow-up of the inner integral.

Given square-integrable spatial ``orbitals'' $a,b,c,d \in L^2(\R^3,\C)$, we define the \emph{Coulomb integral} (following standard notation) as
\begin{equation*}
\bprod{a b}{c d} := \int_{\R^6} \conj{a(\vect{x}_1)} b(\vect{x}_1)\,\frac{1}{\abs{\vect{x}_1-\vect{x}_2}}\,\conj{c(\vect{x}_2)} d(\vect{x}_2) \ud \vect{x}_1 \vect{x}_2,
\end{equation*}
where $\conj{\,\cdot\,}$ is the complex conjugation.
In our setting, we want to calculate the concrete realization
\begin{equation}
\label{eq:CoulombInt}
\bprod{\psi_{n \ell m} \psi_{n'\ell'm'}}{\psi_{\tilde{n}\tilde{\ell}\tilde{m}} \psi_{\tilde{n}'\tilde{\ell}'\tilde{m}'}}
\end{equation}
for single-electron wavefunctions $\psi_{n \ell m}$ from section~\ref{sec:Framework}. The labels $n \ell m$, $n'\ell'm'$ etc are the ``quantum numbers'' in the molecular term symbol~\eqref{eq:MolecularTermSymbol}.

To evaluate these Coulomb integrals in prolate spheroidal coordinates, we pursue the same approach as~\cite{AubertBessis1974} and employ Neumann's expansion
\begin{equation}
\label{eq:NeumannExpansion}
\begin{split}
\frac{1}{\abs{\vect{x}_1-\vect{x}_2}}
&= \frac{4}{R} \sum_{\tau=0}^{\infty} \sum_{\nu=0}^\tau (-1)^\nu \epsilon_\nu \frac{2\tau+1}{2} \left(\frac{(\tau-\nu)!}{(\tau+\nu)!}\right)^2 P^\nu_\tau(\xi_1) Q^\nu_\tau(\xi_2)\\
&\qquad \times P^\nu_\tau(\eta_1) P^\nu_\tau(\eta_2) \cos(\nu(\varphi_1-\varphi_2))
\end{split}
\end{equation}
with $\epsilon_0 = 1$, $\epsilon_\nu = 2$ for $\nu > 0$ and $\xi_1 < \xi_2$ (otherwise interchange $\xi_1 \leftrightarrow \xi_2$). $P^\nu_\tau$ and $Q^\nu_\tau$ are the Legendre functions of the first and second kind, respectively. A derivation of~\eqref{eq:NeumannExpansion} can be found in~\cite{Ruedenberg1951}. For the following, remember the volume element in prolate spheroidal coordinates, $\ud V = (R/2)^3 \left(\xi^2 - \eta^2\right) \ud\xi \ud\eta \ud\varphi$.

With~\eqref{eq:NeumannExpansion} plugged into~\eqref{eq:CoulombInt}, the integrals over $\varphi_1$ and $\varphi_2$ result in
\begin{equation}
\label{eq:PhiInt}
\begin{split}
&\frac{1}{(2\pi)^2} \int_0^{2\pi} \int_0^{2\pi} \cos(\nu(\varphi_1-\varphi_2)) \,\mathrm{e}^{-\ii(m-m')\varphi_1 - \ii(\tilde{m}-\tilde{m}')\varphi_2} \ud\varphi_1 \ud\varphi_2\\
&= \begin{cases}1/\epsilon_\nu& \text{if $m-m' = -(\tilde{m}-\tilde{m}')$ and $\nu = \abs{m-m'}$}\\0&\text{otherwise}
\end{cases}
\end{split}
\end{equation}
Thus, the right hand side of~\eqref{eq:NeumannExpansion} effectively contains the sum over $\tau$ only, starting from $\tau = \nu$.

To approximate the infinite sum over $\tau$, we include all terms up to a threshold $\tau_{\max}$. This truncation is justified due to the exponential convergence, as illustrated in figure~\ref{fig:TauConv}.
\begin{figure}[!ht]
\centering
\includegraphics[width=0.85\textwidth]{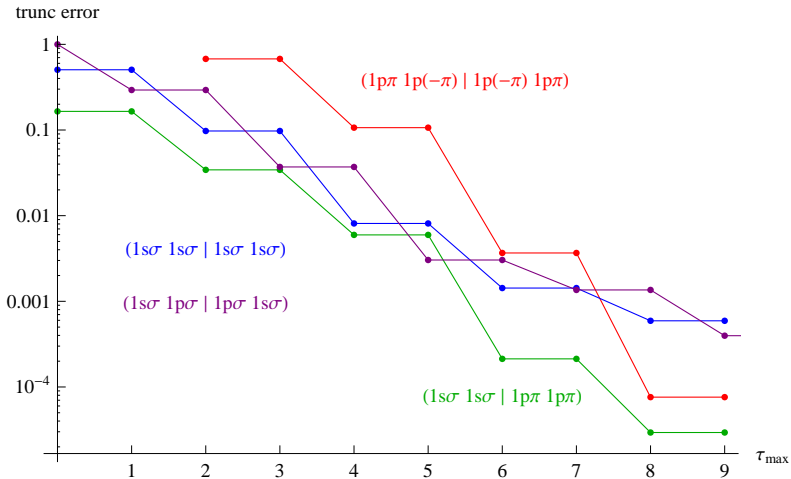}
\caption{Estimated relative truncation error of the sum over $\tau = \nu,\dots,\tau_{\max}$ in the Neumann expansion~\eqref{eq:NeumannExpansion}, exemplified by the Coulomb integrals~\eqref{eq:CoulombInt}. Wavefunctions are taken from section~\ref{sec:Framework} with $R = 121\,\mathrm{pm}$ (experimental bond length of $^{16}\mathrm{O}_2$). The observed exponential convergence renders the expansion~\eqref{eq:NeumannExpansion} particularly useful. We have calculated the error by comparison with the sum up to $\tau = 11$. Only every second $\tau$ contributes due to the symmetry constraint~\eqref{eq:AngularSymmetryConstraint} below, which explains the plateaus of the curves.}
\label{fig:TauConv}
\end{figure}

\subsection{Angular Coulomb Integral}

The expansion~\eqref{eq:NeumannExpansion} admits a separation of the $\eta_1$ and $\eta_2$ integrals. Both are of the same form, so it suffices to restrict the following presentation to the $\eta_1$ integral. Taking into account the volume element, we have to calculate
\begin{equation}
\label{eq:AngularIntegral}
\int_{-1}^1 S^{\mu}_{\ell}(\ii p,\Delta q,\eta) S^{\mu'}_{\ell'}(\ii p',\Delta q,\eta) P^\nu_\tau(\eta)\,\eta^j \ud\eta
\end{equation}
for $j \in \{0,2\}$, where we have once again set $\mu := \abs{m}$ and $\mu' := \abs{m'}$. Since $S^\mu_{\ell}(\ii p,0,\eta)$ and $P^\nu_\tau(\eta)$ have parity  $(-1)^{\ell+\mu}$ and $(-1)^{\tau+\nu}$, respectively, it follows that in the homonuclear case, \eqref{eq:AngularIntegral} is non-zero only if
\begin{equation}
\label{eq:AngularSymmetryConstraint}
\ell+\mu+\ell'+\mu'+\tau+\nu\quad \text{is even}.
\end{equation}
Plugging the expansion~\eqref{eq:SpheroidalAngularExpansion} into~\eqref{eq:AngularIntegral} results in a linear combination of integrals of the following form (see also~\cite[Appendix D]{AubertBessis1974}), which are explicitly solved by Wigner 3j symbols:
\begin{equation*}
\begin{split}
&\frac{1}{2} \prod_{i=1}^3 \left(\frac{(\ell_i-\mu_i)!}{(\ell_i+\mu_i)!}\right)^{1/2} \int_{-1}^1 P^{\mu_1}_{\ell_1}(\eta) P^{\mu_2}_{\ell_2}(\eta) P^{\mu_3}_{\ell_3}(\eta) \ud\eta\\
&\quad = \begin{pmatrix}\ell_1&\ell_2&\ell_3\\0&0&0\end{pmatrix} \times
\begin{cases}
(-1)^{\mu_2+\mu_3} \begin{pmatrix}\ell_1&\ell_2&\ell_3\\\mu_1&-\mu_2&-\mu_3\end{pmatrix}\vspace{5pt}& \text{if $\mu_3 = \abs{\mu_1-\mu_2}$}\\
(-1)^{\mu_3} \begin{pmatrix}\ell_1&\ell_2&\ell_3\\\mu_1&\mu_2&-\mu_3\end{pmatrix}& \text{if $\mu_3 = \mu_1+\mu_2$}
\end{cases}
\end{split}
\end{equation*}
The equation is valid for non-negative integers $\mu_1, \mu_2, \mu_3$, assuming (w.l.o.g.) $\mu_1 \ge \mu_2$. Given the Wigner 3j symbols, this is much easier to calculate than Gaunt's formula of the integral.

The factor $\eta^j$ in~\eqref{eq:AngularIntegral} for $j = 2$ can be incorporated by the identity~\eqref{eq:LegendreMultX} above.

\subsection{Radial Coulomb Integral}

For conciseness of notation, we subsume the ``quantum numbers'' $n \ell m$ from~\eqref{eq:CoulombInt} as $i$, and equivalently for $i'$, $\tilde{i}$ and $\tilde{i}'$. The radial contribution to the Coulomb integral is computationally much more challenging due to the dependence of whether $\xi_1 < \xi_2$ or $\xi_1 \ge \xi_2$. Thus, the integrals over $\xi_1$ and $\xi_2$ cannot be separated any more; instead, we obtain the nested integrals
\begin{equation}
\label{eq:IntRadialNested}
\int_1^\infty \Lambda_{\tilde{i}}(\xi_2) \Lambda_{\tilde{i}'}(\xi_2) Q^\nu_\tau(\xi_2) \xi_2^{\tilde{j}} \int_1^{\xi_2} \Lambda_i(\xi_1) \Lambda_{i'}(\xi_1) P^\nu_\tau(\xi_1)\,\xi_1^j \ud\xi_1 \ud\xi_2 + \left\langle i i'j \leftrightarrow \tilde{i}\tilde{i}'\tilde{j}\right\rangle
\end{equation}
for $j, \tilde{j} \in \{0,2\}$ due to the volume element.

The authors~\cite{AubertBessis1974} apply an integral transformation (from Ref.~\cite{Ruedenberg1951}) to~\eqref{eq:IntRadialNested} and then solve the resulting integral numerically. It consists of an outer integral over the product of two functions, which are themselves integrals. Although this approach inherently respects the symmetry $k \leftrightarrow \tilde{k}$, we haven't found it computationally advantageous as compared to solving~\eqref{eq:IntRadialNested} directly, since three integrals need to be calculated instead of two.

In the following, we provide the details of our approach. We employ the methods developed in section~\ref{sec:LaguerreExpansions} to evaluate~\eqref{eq:IntRadialNested}. As first (and most expensive) step, set $p_{ii'} := (p_i + p_{i'})/2$ and calculate $\vect{d}_{ii'}$ via~\eqref{eq:HylleraasMultArgscale} such that
\begin{equation}
\label{eq:LambdaMult}
\Lambda_i(\xi) \Lambda_{i'}(\xi) \equiv H^\mu_{\vect{d}_i}(2 p_i\,x) H^{\mu'}_{\vect{d}_{i'}}(2 p_{i'}\,x) \stackrel{!}{=} H^{\nu}_{\vect{d}_{ii'}}(2 p_{ii'}\,x), \quad x := \xi-1.
\end{equation}
Proceed analogously for $p_{\tilde{i}\tilde{i}'}$ and $\vect{d}_{\tilde{i}\tilde{i}'}$. Finally, set $p_{ii'\tilde{i}\tilde{i}'} := (p_{ii'} + p_{\tilde{i}\tilde{i}'})/2$ and calculate coefficients $\vect{b}_{ii'}$, $\vect{b}_{\tilde{i}\tilde{i}'}$ via~\eqref{eq:HylleraasArgscale} such that
\begin{equation*}
H^{\nu}_{\vect{d}_{ii'}}(2 p_{ii'}\,x) \stackrel{!}{=} H^{\nu}_{\vect{b}_{ii'}}(2 p_{ii'\tilde{i}\tilde{i}'}\,x)
\end{equation*}
(equivalently for $\vect{b}_{\tilde{i}\tilde{i}'}$). In case $p_{ii'} = p_{\tilde{i}\tilde{i}'}$, this step can be cut short by simply setting $\vect{b}_{ii'} := \vect{d}_{ii'}$ and $\vect{b}_{\tilde{i}\tilde{i}'} := \vect{d}_{\tilde{i}\tilde{i}'}$. Then, after a change of variables, the integral~\eqref{eq:IntRadialNested} (times the normalization factor $(\tau-\nu)!/(\tau+\nu)!$ and for $j,\tilde{j} = 0$) equals
\begin{equation}
\label{eq:RadialIntMatrix}
\hprod{\vect{b}_{ii'}}{B^\nu_\tau(z)\,\vect{b}_{\tilde{i}\tilde{i}'}}, \quad z := 2 p_{ii'\tilde{i}\tilde{i}'}
\end{equation}
with the matrix $B^\nu_\tau(z) \equiv \left(b^\nu_{\tau,k \tilde{k}}(z)\right)_{k\tilde{k}}$ defined by
\begin{multline}
\label{eq:RadialCoulombInt}
b^\nu_{\tau,k \tilde{k}}(z)
:= \frac{(\tau-\nu)!}{(\tau+\nu)!} \int_0^\infty H^\nu_{\tilde{k}}(\tilde{x}) Q^\nu_\tau(1+\tilde{x}/z)\\
\times \int_0^{\tilde{x}} H^\nu_k(x) P^\nu_\tau(1+x/z) \ud x \ud \tilde{x} + \left\langle k \leftrightarrow \tilde{k}\right\rangle.
\end{multline}
The factors $\xi_1^j$ and $\xi_2^{\tilde{j}}$ for $j = 2$ or $\tilde{j} = 2$ in the integral~\eqref{eq:IntRadialNested} can be incorporated via equation~\eqref{eq:HylleraasMultX}, similar to the angular integral.

Thus, given the matrix $B^\nu_\tau(z)$, we have reduced the rather expensive integral~\eqref{eq:IntRadialNested} to the simple matrix formula~\eqref{eq:RadialIntMatrix}. To obtain $B^\nu_\tau(z)$, we have first precomputed the entries~\eqref{eq:RadialCoulombInt} symbolically in $z$ as detailed below. Still, the resulting formulas are quite extensive and preclude a fast numerical evaluation. Our remedy consists in a Taylor expansion of (the entries in) $B^\nu_\tau(z)$,
\begin{equation}
\label{eq:BNuTauTaylor}
B^\nu_\tau(z) \approx \sum_{n=0}^{n_{\max}} \frac{\left(z-z_0\right)^n}{n!} B^{\nu\,(n)}_\tau(z_0).
\end{equation}
We precompute the derivatives $B^{\nu\,(n)}_\tau$ symbolically (up to $n_{\max} = 8$) and then evaluate them at (half)-integers $z_0 = 1, 1.5, 2, 2.5, \dots$. Due to potential numeric cancellation effects, we employ high-precision arithmetic for this intermediate step. Nevertheless, the entries of the resulting matrices $B^{\nu\,(n)}_\tau(z_0)$ are well-behaved and do not increase for higher values of $n$. These numeric matrices are then stored on disk for later usage.

\paragraph{Error estimation}

The sampling of half-integer evaluation points ensures that for each occurring $z$, there is a closest $z_0$ with $\abs{z-z_0} \le 1/4$. Thus, a very coarse error estimate of the Taylor expansion~\eqref{eq:BNuTauTaylor} gives an error of $10^{-11}$, when assuming that the individual entries of $B^{\nu\,(n)}_\tau$ are in the order of $1$, independent of $n$. In reality, we observe even better results, up to double floating-point precision $10^{-16}$.

Until now, we have not yet discussed the truncation error of the Laguerre expansions. As illustrated in figure~\ref{fig:DkExpansion} above, we can reach machine precision due to the exponential decay. However, the number of required coefficients depends on the particular decay parameters. When multiplying two expansions via~\eqref{eq:HylleraasMult}, these numbers typically add up to give the number of coefficients in the resulting expansion. Thus, in our setting, the coefficient vectors $\vect{b}_{ii'}$ and $\vect{b}_{\tilde{i}\tilde{i}'}$ from the formula~\eqref{eq:RadialIntMatrix} have approximately length $36$.

\paragraph{Cost analysis}

Summarizing the above steps after precomputation, our algorithm only needs the \emph{numeric} matrices $B^{\nu\,(n)}_\tau(z_0)$ from the Taylor expansion~\eqref{eq:BNuTauTaylor} as input, instead of the symbolic integrals~\eqref{eq:RadialCoulombInt}. In particular, no numeric integration is required.

The most expensive remaining step is the Laguerre expansion of the product $\Lambda_i(\xi) \Lambda_{i'}(\xi)$ in~\eqref{eq:LambdaMult}. Assuming that the expansion vectors $\vect{d}_i$ and $\vect{d}_{i'}$ have length $K$ and the resulting vector $\vect{d}_{ii'}$ length $2K$, the operation~\eqref{eq:HylleraasMultCoeff} has to be performed $2K$ times, leading to the asymptotic total cost $\mathcal{O}(K^3)$. In our setting, $K$ is typically equal to $18$. Since matrix operations are highly optimized, the computation time is in the order of milliseconds on modern PCs.

\paragraph{Symbolic calculation of the integrals~\eqref{eq:RadialCoulombInt}}

In what follows, we reduce~\eqref{eq:RadialCoulombInt} to the integrals~\eqref{eq:LaguerreNestedExpInt}, \eqref{eq:InvXZLaguerreInt} and~\eqref{eq:LaguerreNestedArcothInt} (solved in section~\ref{sec:LaguerreExpArcothIntegral}). We focus on the relevant cases $\nu = 0,1,2$, but our approach can easily be extended to higher $\nu$. In the simplest case $\tau, \nu = 0$, the integrals~\eqref{eq:LaguerreNestedArcothInt} and~\eqref{eq:RadialCoulombInt} coincide since $Q_0(\xi) = \mathrm{arcoth(\xi)}$. For general $\tau,\nu$, our strategy consists of ``absorbing'' the Legendre functions into the Laguerre polynomials from $H^\nu_k$ and $H^\nu_{\tilde{k}}$ by repeatedly applying equation~\eqref{eq:LaguerreMultX} (multiplication by $x$).

First, remember that the Legendre function of the second kind splits into
\begin{equation*}
\label{eq:QTauNuDecomp}
Q^\nu_\tau(\xi) = \frac{G^\nu_\tau(\xi)}{\left(\xi^2-1\right)^{\nu/2}} + P^\nu_\tau(\xi)\,\mathrm{arcoth}(\xi),
\end{equation*}
where $P^\nu_\tau(\xi)$ is the Legendre function of the first kind and $G^\nu_\tau$ a polynomial of order $\tau-1+\nu$.

Consider the case $\nu = 0$: Since $P_\tau(\xi)$ is actually a polynomial, repeated application of~\eqref{eq:LaguerreMultX} allows us to write
\begin{equation}
\label{eq:LegendreAbsorb}
L_k(x) P_\tau(1+x/z) = \sum_{i=k-\tau}^{k+\tau} a_{k \tau i}(z) L_i(x)
\end{equation}
with some coefficients $a_{k \tau i}(z)$. Proceeding similarly with $G^\nu_\tau(1+x/z)$, the integral~\eqref{eq:RadialCoulombInt} becomes as a linear combination of the integrals~\eqref{eq:LaguerreNestedExpInt} and~\eqref{eq:LaguerreNestedArcothInt}.

The case $\nu = 1$ is more involved since $P^\nu_\tau(\xi)$ is no polynomial any more, but consists of the factor $(\xi^2-1)^{1/2}$ $\times$ a polynomial of order $\tau-1$. To circumvent this difficulty, we first rewrite
\begin{equation*}
H^1_k(x) P^1_\tau(1+x/z) = \frac{1}{\sqrt{2z+x}} H^1_k(x) \cdot \sqrt{2z+x}\, P^1_\tau(1+x/z).
\end{equation*}
A symbolic Laguerre expansion of the first factor (computed in~\eqref{eq:SqrtHylleraasInt}, section~\ref{sec:LaguerreExpArcothIntegral}) transforms the right hand side to a linear combination of
\begin{equation*}
H^1_i(x) \sqrt{2z+x}\, P^1_\tau(1+x/z).
\end{equation*}
Plugging in the definition of $H^1_i(x)$, we obtain
\begin{equation*}
\sqrt{1/(i+1)}\, L^1_i(x)\, \mathrm{e}^{-x/2} \underbrace{\sqrt{x\,(2z+x)}\, P^1_\tau(1+x/z)}_{\mathrm{poly}(x)}.
\end{equation*}
Since the second part is a polynomial in $x$, we can proceed as for $\nu = 0$. The same transformation works for $Q^\nu_\tau(\xi)$ as well, which completes the case $\nu = 1$.

Finally, for $\nu = 2$, the decomposition~\eqref{eq:QTauNuDecomp} times the factor $x$ (from $H^2_k(x)$) reads
\begin{equation*}
x\, Q^2_\tau(1+x/z) = \frac{z^2}{2z+x} G^2_\tau(1+x/z) + x\,P^2_\tau(1+x/z)\,\mathrm{arcoth}(1+x/z).
\end{equation*}
After a transformation similar to~\eqref{eq:LegendreAbsorb}, we conclude that the integral~\eqref{eq:RadialCoulombInt} can be reduced to a linear combination of the integrals~\eqref{eq:LaguerreNestedArcothInt} and~\eqref{eq:InvXZLaguerreInt} (after a change of variables $x \to x/2$ and using the Laguerre product coefficients $c^\vect{\mu}_\vect{i}(1)$ in~\eqref{eq:cmi_z}).

\paragraph{Testing the implementation}

As first check, we calculate the following (homonuclear) Coulomb integral and obtain
\begin{equation*}
\bprod{\psi_{\mathrm{1s\sigma}} \psi_{\mathrm{1s\sigma}}}{\psi_{\mathrm{1s\sigma}} \psi_{\mathrm{1s\sigma}}} = 0.780883 \quad \text{for} \quad R = 1.4\,\mathrm{a.u.},\quad Z_a = Z_b = 1
\end{equation*}
with summation up to $\tau_{\max} = 9$ in~\eqref{eq:NeumannExpansion}. This agrees to all digits with the tabulated value in Ref.~\cite{AubertBessis2nd1974}. Similarly, for the heteronuclear case ($\mathrm{HeH}^+$ molecular ion with $Z_a = 2$ and $Z_b = 1$), we obtain
$\bprod{\psi_{\mathrm{1s\sigma}} \psi_{\mathrm{1s\sigma}}}{\psi_{\mathrm{1s\sigma}} \psi_{\mathrm{1s\sigma}}} = 1.23207$, which agrees in $4$ digits with the value $1.23225$ from Ref.~\cite{AubertBessis2nd1974}. The discrepancy could stem from lower precision arithmetic in~\cite{AubertBessis2nd1974}, or from a different truncation of the Neumann expansion. We have verified all digits of our value using Mathematica's numeric integration routines to solve the integrals \eqref{eq:AngularIntegral} and~\eqref{eq:IntRadialNested} directly (which is much slower in this case), and by comparing the truncations $\tau_{\max} = 6, 7, 8, 9$ of the Neumann expansion (all agreeing in the first $7$ digits).

For further testing, we have numerically computed the integrals~\eqref{eq:IntRadialNested} with several other parameters, and found that the values agree in at least $12$ digits.

\section{Application to Diatomic Molecules}
\label{sec:Results}

To demonstrate the feasibility of our algorithmic approach, we apply it to the diatomic molecules $\mathrm{O}_2$ and $\mathrm{CO}$.

The $N$-electron Hamiltonian for diatomic molecules in atomic units (Born-Oppenheimer approximation) reads
\begin{equation}
\label{eq:HamiltonianNBody}
H = \sum_{i=1}^N \left(-\frac{1}{2}\Delta_i - \frac{Z_a}{r_{ia}} - \frac{Z_b}{r_{ib}}\right) + \sum_{i<j} \frac{1}{r_{ij}} + \frac{Z_a Z_b}{R},
\end{equation}
where $r_{ia}$ and $r_{ib}$ denote the distances of the $i^\mathrm{th}$ electron to the fixed nuclei at $(0,0,\mp R/2)$, respectively, $Z_a, Z_b \in \N_{>0}$ the nuclear charges (as for the single electron Schr\"odinger equation~\eqref{eq:SchroedingerSingle}), and $r_{ij} \equiv \abs{\vect{x}_i - \vect{x}_j}$ the inter-electron distance between electron $i$ and $j$. The first sum (denoted $H_0$) contains precisely the single-electron Hamiltonian~\eqref{eq:SchroedingerSingle}, the second sum is the inter-electron Coulomb repulsion (denoted $V_\mathrm{ee}$), and the last term the repulsion of the nuclei. The homonuclear version ($Z_a = Z_b =: Z$) describes atomic dimers like hydrogen $\mathrm{H}_2$ or oxygen $\mathrm{O}_2$.

Analogous to Ref.~\cite{NuclearChargeLimit2009}, it is instructive to investigate the limit of large nuclear charge $Z$. (We consider the homonuclear case here for simplicity.) Namely, a short calculation shows that if $\Psi(\vect{x}_1,\sigma_1,\dots,\vect{x}_N,\sigma_N) \in L^2_a((\R^3 \times \{\pm\tfrac12\})^N)$ solves the $N$-electron Schr\"odinger equation $H\Psi = E\Psi$ with $H$ defined in~\eqref{eq:HamiltonianNBody}, then the rescaled wavefunction
\begin{equation*}
\tilde{\Psi}(\vect{y}_1,\sigma_1,\dots,\vect{y}_N,\sigma_N) := Z^{-3N/2}\, \Psi(Z^{-1}\vect{y}_1,\sigma_1,\dots,Z^{-1}\vect{y}_N,\sigma_N)
\end{equation*}
solves
\begin{equation}
\label{eq:HamiltonianNBodyZ}
\left(\tilde{H}_0 + \frac{1}{Z} V_{ee} + \frac{1}{R}\right)\tilde{\Psi} = \frac{E}{Z^2} \tilde{\Psi}
\end{equation}
with
\begin{equation*}
\tilde{H}_0 := \sum_{i=1}^N \left(-\frac{1}{2} \Delta_i - \frac{1}{\abs{\vect{y}_i + \tfrac{1}{2} Z R\,\vect{e}_3}} - \frac{1}{\abs{\vect{y}_i - \tfrac{1}{2} Z R\,\vect{e}_3}}\right).
\end{equation*}
As $Z \to \infty$, $\tilde{H}_0$ describes two isolated atoms and the electron-electron interaction $\tfrac{1}{Z} V_{ee}$ becomes small due to the prefactor $\tfrac{1}{Z}$. Since $\tilde{H}_0$ depends on $Z$, we cannot repeat the exact same analysis as in Ref.~\cite{NuclearChargeLimit2009}, but our single-electron wavefunctions are eigenfunctions of $H_0$ nevertheless. Thus we expect that our calculations match highly-charged (electrically confined) molecular ions well and could serve as benchmark for alternative computational approaches.

To allow for comparison with experimental data, we focus on the paramagnetic ``triplet'' oxygen molecule $\mathrm{O}_2$ in the following paragraph, i.e., $Z = 8$ and $N = 16$. The ground state symmetry is characterized by the molecular symbol $^3 \Sigma_g^-$. That is, the spin quantum number equals $1$ (hence ``triplet''), the angular $L_z$ momentum quantum number is zero (rotation about internuclear axis), and the parity is even.

The common textbook version of the electronic quantum state reads as follows. All molecular orbitals up to $\pi_u(2p_{x,y})$ are completely filled (see figure~\ref{fig:MolecularOrbitals}), leaving the two remaining electrons in the ``antibonding'' $\pi_g^*\,(2p_{x,y})$ orbitals. These two electrons form a spin triplet, hence the paramagnetism. With the mapping from figure~\ref{fig:MolecularOrbitals}, the electronic configuration corresponds to the following Slater determinant:
\begin{equation*}
\Psi_1 = \ket{\mathrm{
1s\sigma_g\!\uparrow\downarrow
1p\sigma_u\!\uparrow\downarrow
2s\sigma_g\!\uparrow\downarrow
2p\sigma_u\!\uparrow\downarrow
1d\sigma_g\!\uparrow\downarrow
1p(\pm\pi)_u\!\uparrow\downarrow
1d(\pm\pi)_g\!\uparrow}}.
\end{equation*}

In what follows, we try to approximate the groundstate energy of $\mathrm{O}_2$ via the methods from the previous chapters, with summation up to $\tau_{\max} = 9$ in the Neumann expansion~\eqref{eq:NeumannExpansion}. We include all wavefunctions of the $^3 \Sigma_g^-$ symmetry subspace, restricted to configurations with the $\mathrm{1s\sigma_g\!\uparrow\downarrow}$ and $\mathrm{1p\sigma_u\!\uparrow\downarrow}$ orbitals completely filled, and the occupations of the higher orbitals (up to $\mathrm{1f\sigma_u\!\uparrow\downarrow}$) allowed to vary. In our case, this gives $54$ wavefunctions, including $\Psi_1$. For example, another state in the symmetry subspace reads
\begin{equation*}
\Psi_2 = \ket{\mathrm{\,\cdots\,
1p(\pm\pi)_u\!\uparrow
1d(\pm\pi)_g\!\uparrow\downarrow}},
\end{equation*}
which is the same as $\Psi_1$ except for half-occupied $\mathrm{1p(\pm\pi)_u}$ molecular orbitals instead of $\mathrm{1d(\pm\pi)_g}$.

Thus, the groundstate energy is the smallest eigenvalue of the $54 \times 54$ matrix $\hprod{\Psi_i}{H\,\Psi_j}_{i,j}$, with the Hamiltonian $H$ defined in~\eqref{eq:HamiltonianNBody}. Since the $\Psi_i$ are exact eigenstates of the $N$-body Hamiltonian without the inter-electron Coulomb repulsion $V_\mathrm{ee}$, the latter can be regarded as perturbation of $(H - V_\mathrm{ee})$ (see also equation~\eqref{eq:HamiltonianNBodyZ}).

We use the software toolbox~\cite{FermiFabSoftware,FermiFabPaper2011} to express $\hprod{\Psi_i}{V_\mathrm{ee}\,\Psi_j}$ as linear combination of Coulomb integral symbols~\eqref{eq:CoulombInt}, after tracing-out the spin variables. The symmetry properties $\bprod{a b}{c d} = \bprod{c d}{a b}$ and $\bprod{a b}{c d} = \conj{\bprod{b a}{d c}}$ simplify the resulting expressions. As concrete example, the following off-diagonal matrix element reads
\begin{equation*}
\hprod{\Psi_1}{V_{\mathrm{ee}} \Psi_2} = \bprod{\psi_{\mathrm{1p\pi_u}} \psi_{\mathrm{1d\pi_g}}}{\psi_{\mathrm{1p(-\pi)_u}} \psi_{\mathrm{1d(-\pi)_g}}} - \bprod{\psi_{\mathrm{1p\pi_u}} \psi_{\mathrm{1d(-\pi)_g}}}{\psi_{\mathrm{1p(-\pi)_u}} \psi_{\mathrm{1d\pi_g}}}.
\end{equation*}
Both diagonal entries $\hprod{\Psi_i}{V_{\mathrm{ee}} \Psi_i}$ ($i = 1,2$) are quite extensive, consisting of $79$ individual Coulomb integrals.

Our approach can easily be adapted to other symmetry subspaces. Thus we include the experimentally next low-lying symmetry levels $^1 \Delta_g$ and $^1 \Sigma_g^+$ as well (see for example Ref.~\cite{Bernath2002} for an overview).

\begin{figure}[!ht]
\centering
\includegraphics[width=0.9\textwidth]{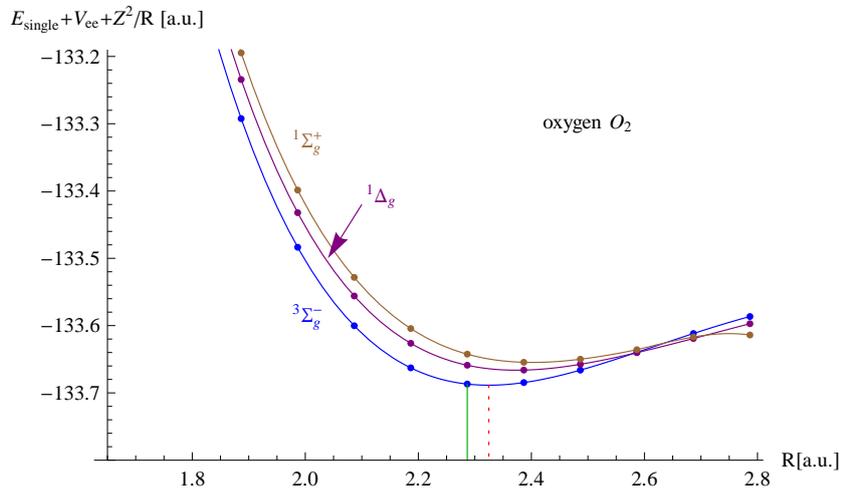}
\caption{Potential energy curves of the $\mathrm{O}_2$ molecule (this paper), i.e., lowest eigenvalue of the matrix $\hprod{\Psi_i}{H\,\Psi_j}$ with the Hamiltonian $H$ from~\eqref{eq:HamiltonianNBody} restricted to the corresponding symmetry subspace. Dots are calculated values, and the continuous line a spline interpolation of degree 3. The minimum $-133.689\,\mathrm{a.u.}$ of the $^3 \Sigma_g^-$ symmetry level is attained at $R = 2.325\,\mathrm{a.u.}$ (dotted red line). For comparison, the experimental bond length from the literature (green line) is also shown.}
\label{fig:O2Energy}
\end{figure}
The result of our calculations is plotted in figure~\ref{fig:O2Energy}, showing the electronic groundstate energy (blue dissociation curve) as well as excited energy levels (purple and brown curves) dependent of the nuclear distance $R$. Our computation predicts an optimal bond length $R_{\min} = 2.325\,\mathrm{a.u.}$ (dotted red line), which is quite close to the experimental value from the literature~\cite{OxygenMolecularConstants1975,HuberHerzberg1979}, $R_{\mathrm{exp}}(^{16}\mathrm{O}_2) = 121\,\mathrm{pm} = 2.2866\,\mathrm{a.u.}$ (green line). Additionally, we reproduce the experimental ordering of the symmetry states.

Having obtained the groundstate energy, we can calculate the dissociation energy $\mathrm{O}_2 \to 2\,\mathrm{O}$ by subtracting ($2 \times$) the energy of an individual oxygen atom. Since the outcome of theoretical calculations depends on the particular model (e.g., the Ansatz space of single-electron wavefunctions), similar models should be used for both the $\mathrm{O}_2$ molecule and the individual atoms. In our case, a close match regarding single atoms is Ref.~\cite{NuclearChargeLimit2009} as already mentioned above. Namely, the authors use hydrogen-like wavefunctions as Ansatz space and treat the inter-electron Coulomb repulsion as perturbation (similar to the present study). Additionally, the electronic configurations match ours in the $R \to \infty$ limit (available atomic subshells $\mathrm{1s}, \mathrm{2s}, \mathrm{2p}$, with the lowest $\mathrm{1s}$ subshell always occupied). From~\cite{NuclearChargeLimit2009}, $E_{\mathrm{O},\min} = -66.7048\,\mathrm{a.u.}$ for the groundstate angular momentum/spin symmetry $^3P$. Thus, we obtain the dissociation energy
\begin{equation}
\label{eq:Ediss_calc}
2\,E_{\mathrm{O},\min} - E_{{\mathrm{O}_2},\min} = 0.278971\,\mathrm{a.u.}
\end{equation}
For comparison, the experimental dissociation energy of oxygen (enthalpy change at $0\,\mathrm{K}$) is $E_{\mathrm{O}_2,\mathrm{exp\,diss}} = 5.1157\,\mathrm{eV} = 0.1879\,\mathrm{a.u.}$~\cite{BondDissociation1970}, which differs from our calculated value~\eqref{eq:Ediss_calc} by approximately $50\%$. The discrepancy is likely due to the small dimension of the Ansatz space (number of single-electron wavefunctions, up to the $2p$ subshell in our case). Note that the dissociation energy is $3$ orders of magnitude smaller than the total energy. Thus, subtracting groundstate energies requires at least $4$ correct decimal digits for just $1$ digit of the dissociation energy. In any case, our calculated value reproduces the experimental data qualitatively correct, in particular the sign (i.e., the fact that $\mathrm{O}_2$ binds).

To provide an example for a \emph{heteronuclear} molecule, we repeat the analogous calculations for carbon monoxide $\mathrm{CO}$, i.e., $Z_a = 8$, $Z_b = 6$ and $N = 14$. Figure~\ref{fig:COEnergyR} shows the resulting ground state dissociation curve with the same spheroidal Ansatz space (up to $1f\sigma\!\uparrow\downarrow$) as for oxygen. Notably, the deviation between the experimental bond length $R_\mathrm{exp}(^{12}\mathrm{C}^{16}\mathrm{O}) = 112.8\,\mathrm{pm}$ (green line, \cite{Bunker1970,HuberHerzberg1979}) and the calculated minimizer of the curve (dotted red line) is relatively large. This is presumably due to the small number of spheroidal basis functions. Indeed, when including the $3s\sigma\!\uparrow\downarrow$ spheroidal orbitals, the minimizer of the curve approaches the experimental value (figure~\ref{fig:COEnergyR3s}).

\begin{figure}[!ht]
\centering
\subfloat[]{
\includegraphics[width=0.5\textwidth]{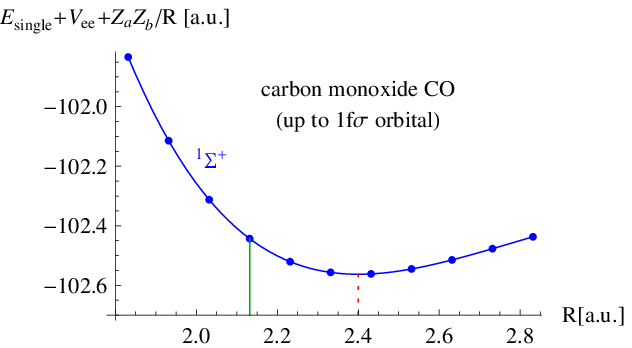}
\label{fig:COEnergyR}}
\subfloat[]{
\includegraphics[width=0.5\textwidth]{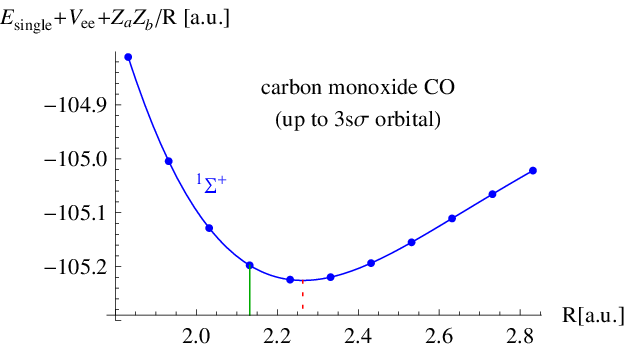}
\label{fig:COEnergyR3s}}
\caption{Potential energy curve of the $\mathrm{CO}$ molecule restricted to the groundstate $^1\Sigma^+$ symmetry subspace (this paper). (a) Same basis set as in figure~\ref{fig:O2Energy}; (b) additionally including the $3s\sigma\!\uparrow\downarrow$ spheroidal orbitals. In (b), the minimizer of the curve (dotted red line) is closer to the experimental bond length (green line). Note that the energy axes are shifted by approximately $3\,\mathrm{a.u.}$}
\end{figure}

\section{Conclusions and Outlook}
\label{sec:Conclusions}

We have developed and implemented an efficient computational framework to evaluate the angular and radial Coulomb/exchange integrals in prolate spheroidal coordinates by employing Neumann's expansion of $1/\abs{\vect{x}-\vect{y}}$ and taking advantage of symbolic integration as far as possible. The algorithm strongly relies on matrix operations to speed up computations.

A particular advantage of our approach is the universality of the precomputed numeric matrices in~\eqref{eq:BNuTauTaylor}. Once obtained, these matrices can be reused for subsequent calculations.

The application to the oxygen and carbon monoxide molecules shows the feasibility of our algorithm. We reproduce qualitatively correct energy curves, and the calculated bond length and dissociation energy are in reasonable agreement with experimental values.

A long-term goal of the present paper is a better understanding and quantitative description of atomic interactions and chemical bonds, which could be modeled using spheroidal orbitals. To reduce complexity, one could employ the well-known hydrogen-like orbitals for the core electrons (close to the nucleus). This combination of spheroidal and hydrogen-like orbitals requires proper orthonormalization and the calculation of Coulomb/exchange integrals between these different kind of orbitals. Inversing the LCAO Ansatz to approximate the spheroidal wavefunctions locally (close to an atomic nucleus) might be feasible for these purposes.

Finally, the algorithm presented here could be combined with established computational chemistry methods (like Configuration Interaction or Coupled Cluster) in future projects.

\paragraph*{Acknowledgements}

I'd like to thank Gero Friesecke, Ben Goddard and Martin F\"urst for many helpful discussions.

{\footnotesize

\begin{thebibliography}{10}

\bibitem{Atkins2006}
Peter Atkins and Julio de~Paula.
\newblock {\em {Physical Chemistry, 8th edition}}.
\newblock W. H. Freeman, 2006.

\bibitem{AubertBessis1974}
M.~Aubert, N.~Bessis, and G.~Bessis.
\newblock {Prolate-spheroidal orbitals for homonuclear and heteronuclear
  diatomic molecules. I. Basic procedure}.
\newblock {\em Physical Review A}, 10(1):51--60, 1974.

\bibitem{AubertBessis2nd1974}
M.~Aubert, N.~Bessis, and G.~Bessis.
\newblock {Prolate-spheroidal orbitals for homonuclear and heteronuclear
  diatomic molecules. II. Shielding effects of the two-electron problem}.
\newblock {\em Physical Review A}, 10(1):61--70, 1974.

\bibitem{Bernath2002}
Peter~F. Bernath.
\newblock {\em {Electronic Spectroscopy of Diatomic Molecules, in Handbook of
  Molecular Physics and Quantum Chemistry}}.
\newblock Wiley, 2002.

\bibitem{Boys1950}
S.~F. Boys.
\newblock {Electronic Wave Functions. I. A General Method of Calculation for
  the Stationary States of Any Molecular System}.
\newblock {\em Proceedings of the Royal Society of London. Series A
  Mathematical and Physical Sciences}, 200(1063):542--554, 1950.

\bibitem{Bunker1970}
P.R. Bunker.
\newblock {The effect of the breakdown of the Born-Oppenheimer approximation on
  the determination of $B_e$ and $\omega_e$ for a diatomic molecule}.
\newblock {\em Molecular Spectroscopy}, 35:306--313, 1970.

\bibitem{DiatomicHylleraas1977}
David~C. Clary.
\newblock {Variational calculations on many-electron diatomic molecules using
  Hylleraas-type wavefunctions}.
\newblock {\em Molecular Physics}, 34: 3:793--811, 1977.

\bibitem{BondDissociation1970}
B.~deB. Darwent.
\newblock {\em {Bond Dissociation Energies in Simple Molecules}}.
\newblock Number~31 in National Standard Reference Data Series. National Bureau
  of Standards, 1970.

\bibitem{GaussianMolecularX2001}
Thom~H. Dunning, Kirk~A. Peterson, and Angela~K. Wilson.
\newblock {Gaussian basis sets for use in correlated molecular calculations. X.
  The atoms aluminum through argon revisited}.
\newblock {\em Journal of Chemical Physics}, 114:9244--9253, 2001.

\bibitem{Falloon2003}
Peter~E. Falloon, P.~C. Abbott, and J.~B. Wang.
\newblock {Theory and computation of spheroidal wavefunctions}.
\newblock {\em Journal of Physics A: Mathematical and Theoretical},
  36:5477--5495, 2003.

\bibitem{Flammer1957}
Carson Flammer.
\newblock {\em {Spheroidal Wave Functions}}.
\newblock Stanford University Press, 1957.

\bibitem{NuclearChargeLimit2009}
Gero Friesecke and Benjamin~D. Goddard.
\newblock {Explicit Large Nuclear Charge Limit of Electronic Ground States for
  Li, Be, B, C, N, O, F, Ne and Basic Aspects of the Periodic Table}.
\newblock {\em SIAM Journal on Mathematical Analysis}, 41(2):631--664, 2009.

\bibitem{Killingbeck1989}
Gisele Hadinger, M.~Aubert-Frecon, and Gerold Hadinger.
\newblock {The Killingbeck method for the one-electron two-centre problem}.
\newblock {\em Journal of Physics B}, 22:697--712, 1989.

\bibitem{Herring1962}
Conyers Herring.
\newblock {Critique of the Heitler-London Method of Calculating Spin Couplings
  at Large Distances}.
\newblock {\em Reviews of Modern Physics}, 34(4):631--645, Oct 1962.

\bibitem{Hodge1970}
D.~B. Hodge.
\newblock {Eigenvalues and Eigenfunctions of the Spheroidal Wave Equation}.
\newblock {\em Journal of Mathematical Physics}, 11:2308, 1970.

\bibitem{Holstein1952}
T.~Holstein.
\newblock {Mobilities of Positive Ions in their Parent Gases}.
\newblock {\em Journal of Physical Chemistry}, 56(7):832--836, 1952.

\bibitem{HuberHerzberg1979}
K.~P. Huber and Gerhard Herzberg.
\newblock {\em {Molecular Spectra and Molecular Structure. IV. Constants of
  Diatomic Molecules}}.
\newblock Van Nostrand, 1979.

\bibitem{Hylleraas1929}
Egil~A. Hylleraas.
\newblock {Neue Berechnung der Energie des Heliums im Grundzustande, sowie des
  tiefsten Terms von Ortho-Helium}.
\newblock {\em Zeitschrift f\"ur Physik A: Hadrons and Nuclei}, 54:347--366,
  1929.

\bibitem{JamesCoolidge1933}
Hubert~M. James and Albert~Sprague Coolidge.
\newblock {The Ground State of the Hydrogen Molecule}.
\newblock {\em The Journal of Chemical Physics}, 1(12):825--835, 1933.

\bibitem{R12MethodsReview2006}
Wim Klopper, Frederick~R. Manby, Seiichiro Ten-No, and Edward~F. Valeev.
\newblock {R12 methods in explicitly correlated molecular electronic structure
  theory}.
\newblock {\em International Reviews in Physical Chemistry}, 25(3):427--468,
  2006.

\bibitem{LennardJones1929}
J.~E. Lennard-Jones.
\newblock {The electronic structure of some diatomic molecules}.
\newblock {\em Transactions of the Faraday Society}, 25:668--686, 1929.

\bibitem{LaguerreProducts1960}
R.~D. Lord.
\newblock {Integrals of products of Laguerre polynomials}.
\newblock {\em Mathematics of Computation}, 14:375--376, 1960.

\bibitem{KohnShamDiatomic2009}
Adi Makmal, Stephan K\"ummel, and Leeor Kronik.
\newblock {Fully Numerical All-Electron Solutions of the Optimized Effective
  Potential Equation for Diatomic Molecules}.
\newblock {\em Journal of Chemical Theory and Computation}, 5:1731--1740, 2009.

\bibitem{Meixner1954}
J.~Meixner and F.~W. Sch\"afke.
\newblock {\em {Mathieusche Funktionen und Sph\"aroidfunktionen}}.
\newblock Springer-Verlag, Berlin, 1954.

\bibitem{FermiFabPaper2011}
Christian~B. Mendl.
\newblock {The FermiFab Toolbox for Fermionic Many-Particle Quantum Systems}.
\newblock {\em Computer Physics Communications}, 182:1327--1337, 2011.

\bibitem{FermiFabSoftware}
Christian~B. Mendl.
\newblock {http://sourceforge.net/projects/fermifab}, 2012.

\bibitem{Ruedenberg1951}
Klaus R\"udenberg.
\newblock {A Study of Two-Center Integrals Useful in Calculations on Molecular
  Structure. II. The Two-Center Exchange Integrals}.
\newblock {\em Journal of Chemical Physics}, 19:1459--1477, 1951.

\bibitem{ExchangeTwoElectronDiatomic2004}
Tony~C. Scott, Monique Aubert-Frecon, D.~Andrae, Johannes Grotendorst,
  J.D.~Morgan III, and M.L. Glasser.
\newblock {Exchange Energy for Two-Active-Electron Diatomic Systems within the
  Surface Integral Method}.
\newblock {\em Applicable Algebra in Engineering, Communication and Computing},
  15:101--128, 2004.

\bibitem{ScottAubertGrotendorst2006}
Tony~C. Scott, Monique Aubert-Frecon, and Johannes Grotendorst.
\newblock {New approach for the electronic energies of the hydrogen molecular
  ion}.
\newblock {\em Chemical Physics}, 324:323--338, 2006.

\bibitem{AsymptoticsExchangeIntegralScott2004}
Tony~C. Scott, Monique Aubert-Frecon, Gisele Hadinger, Dirk Andrae, Johannes
  Grotendorst, and John D.~Morgan III.
\newblock {Asymptotically exact calculation of the exchange energies of
  one-active-electron diatomic ions with the surface integral method}.
\newblock {\em Journal of Physics B: Atomic, Molecular and Optical Physics},
  37(22):4451, 2004.

\bibitem{OxygenMolecularConstants1975}
Wayne Steinbach and Walter Gordy.
\newblock {Microwave spectrum and molecular constants of $^{16}$O $^{18}$O}.
\newblock {\em Physical Review A}, 11:729--731, 1975.

\bibitem{Stratton1941}
Julius~Adams Stratton, Philip~M. Morse, L.~J. Chu, and R.~A. Hutner.
\newblock {\em {Elliptic Cylinder and Spheroidal Wave Functions}}.
\newblock Wiley New York, 1941.

\bibitem{Tang1991}
K.~T. Tang, J.~P. Toennies, and C.~L. Yiu.
\newblock {The exchange energy of H$_2^+$ calculated from polarization
  perturbation theory}.
\newblock {\em Journal of Chemical Physics}, 94(11):7266--7277, 1991.

\end{thebibliography}

}

\end{document}